\newtheorem{theorem}{Theorem}
\newtheorem{lemma}{Lemma}
\newtheorem{proposition}{Proposition}
\newtheorem{remark}{Remark}
\pgfplotsset{compat=newest}           
\pgfplotsset{compat=1.17}
\begin{document}
    \title{Location-Driven Programmable Wireless Environments through Light-emitting RIS (LeRIS)}
  \author{Dimitrios~Bozanis,~\IEEEmembership{Student Member,~IEEE,} Dimitrios~Tyrovolas,~\IEEEmembership{Member,~IEEE,} \\
      Vasilis K. Papanikolaou,~\IEEEmembership{Member,~IEEE,} 
      Sotiris A. Tegos,~\IEEEmembership{Senior Member,~IEEE,} \\ Panagiotis D. Diamantoulakis,~\IEEEmembership{Senior Member,~IEEE,} Christos K. Liaskos,~\IEEEmembership{Member,~IEEE,} \\ Robert Schober,~\IEEEmembership{Fellow,~IEEE},
      and~George K.~Karagiannidis,~\IEEEmembership{Fellow,~IEEE}

\thanks{D. Bozanis, D. Tyrovolas, S. A. Tegos, P. D. Diamantoulakis, and G. K. Karagiannidis are with the Aristotle University of Thessaloniki, 54124 Thessaloniki, Greece (dimimpoz@auth.gr, tyrovolas@auth.gr, tegosoti@auth.gr, padiaman@auth.gr, geokarag@auth.gr).}
\thanks{V. K. Papanikolaou and R. Schober are with the Friedrich Alexander University, 91058 Erlangen, Germany (vasilis.papanikolaou@fau.de, robert.schober@fau.de).}
\thanks{C. K. Liaskos is with the University of Ioannina, Greece, and with the Foundation for Research and Technology Hellas (FORTH), Greece (cliaskos@uoi.gr).}
}


\maketitle

\begin{abstract}
As 6G wireless networks seek to enable robust and dynamic programmable wireless environments (PWEs), reconfigurable intelligent surfaces (RISs) have emerged as a cornerstone for controlling electromagnetic wave propagation. However, realizing the potential of RISs for demanding PWE applications depends on precise and real-time user localization, especially in scenarios with random receiver orientations and inherent hardware imperfections. To address this challenge, we propose a novel optical localization framework that integrates conventional ceiling-mounted LEDs with light-emitting reconfigurable intelligent surfaces (LeRISs). By leveraging the spatial diversity offered by the LeRIS architecture, the framework introduces robust signal paths that improve localization accuracy and reduce errors under varying orientations. To this end, we derive a system of equations for received signal strength-based localization that accounts for random receiver orientations and imposes spatial constraints on LED placement, ensuring unique and reliable solutions. Finally, our simulation results demonstrate that the proposed framework achieves precise beam control and high spectral efficiency even for RISs with large number of reflecting elements, establishing our solution as scalable and adaptive for PWEs that require real-time accuracy and flexibility.

\end{abstract}

\begin{IEEEkeywords}
Reconfigurable Intelligent Surfaces, Light-emitting RIS (LeRIS), Localization, Programmable Wireless Environments, Optical Positioning
\end{IEEEkeywords}

%
\IEEEpeerreviewmaketitle

\section{Introduction}
As wireless networks evolve toward the sixth generation (6G), which is expected to support applications such as immersive metaverse, smart homes, and advanced healthcare facilities, the focus has shifted from optimizing individual communication links to transforming the wireless propagation environment into a controllable and intelligent entity \cite{metaverse}. This vision is realized through programmable wireless environments (PWEs), where electromagnetic wave (EM) propagation is dynamically adapted to user needs, providing personalized, high-quality services \cite{PWE1}. At the core of this vision are reconfigurable intelligent surfaces (RISs), which are designed to manipulate the reflection of incident EM waves \cite{PWE1,RISsurvey}. Specifically, RISs can be strategically placed throughout the environment and interact with incoming waves through multiple controllable reflecting elements, dynamically adjusting EM wave properties such as phase, direction, and amplitude. Through this precise control, RISs perform EM functionalities such as beam steering to redirect signals to specific targets, absorption to reduce interference or to harvest energy, and diffusion to spread signals \cite{Commag,zeris,RRS}. As a result, RISs can reshape the propagation environment in real time, improving network coverage and capacity while meeting the stringent quality of service (QoS) requirements expected in 6G applications.

The ability to precisely manipulate EM waves paves the way for futuristic services that require highly accurate EM control.  For example, the authors of \cite{XRRF} introduced the XR-RF concept, where RISs direct precise wavefronts to receivers to support ultra-low-latency extended reality applications by translating wavefronts into graphics, while the authors of \cite{Guo2024} proposed using RISs for advanced physical layer security by dynamically generating and controlling secure communication zones to prevent unauthorized eavesdropping. Achieving these advanced functionalities, however, requires a large number of reflecting elements, as each element contributes to a more refined control over the wave characteristics \cite{RISsurvey,PWE1}. However, when such precision is required, even small inaccuracies in RIS configuration can result in significant QoS degradation, thus undermining the capabilities of PWEs. Thus, robust RIS configuration schemes are essential to realize futuristic PWE-based services.

Given the need for robust RIS configuration schemes, precise receiver localization has emerged as an effective strategy for determining optimal phase shifts and reflection angles, allowing RISs to adapt their EM functionalities to specific user positions and orientations \cite{PWE1,locationdriven}. In particular, unlike traditional channel state information (CSI)-based methods that rely on iterative channel estimation with high computational overhead, localization-driven configuration reduces complexity and avoids the challenges associated with frequent CSI updates in dynamic environments \cite{locationdriven}. In addition, localization-driven approaches provide a more stable and predictable basis for RIS configuration because the spatial information changes relatively slowly over time, reducing the need for frequent recalibration. Furthermore, the resulting reduction in computational complexity improves system responsiveness, making location-driven configurations particularly well suited for latency-sensitive applications of 6G networks \cite{locationdriven}. To this end, achieving the precision required for accurate wavefront manipulation requires a locate-and-then-configure (LATC) approach capable of continuous user localization and real-time PWE adaptation. 

\subsection{State of the Art}
Building on the central role of RISs in PWEs for precise EM wave manipulation, accurate LATC schemes are critical. This challenge arises from the dual requirement for RISs to precisely manage wave propagation and to dynamically adjust their functionality according to the position of the receiver, which is often apriori unknown. One approach to address this challenge is to implement an LATC strategy that leverages the existing RIS infrastructure to localize PWE users \cite{localizationarxigos, Fascista2022Diff,dardari2022sweep2,Jiang2023sweep,Gridding1,Gridding2,Gridding3}. In particular, by dynamically adjusting wave properties such as phase, direction, and amplitude, RISs can potentially achieve high localization accuracy through tailored EM functionalities. However, in scenarios where the receiver's position is unknown, RISs must rely on location-agnostic functionalities, requiring specially designed techniques that can perform localization without prior positional knowledge. In this direction, the authors of \cite{Fascista2022Diff} introduced an iterative approach that initiates localization with a broad diffused RIS beam pattern progressively narrows over time as the receiver calculates signal to noise ratio (SNR) values from received signals and reports them back to the base station for position refinement. While this iterative feedback can improve localization accuracy, its reliance on SNR data introduces latency, limiting its effectiveness for applications requiring rapid PWE adjustment.

To extend location-agnostic functionalities, the authors of \cite{dardari2022sweep2} and \cite{Jiang2023sweep} proposed a beam-sweeping approach in which the induced RIS phase shifts are circularly varied to create a directional beam that systematically scans the environment. Specifically, in \cite{dardari2022sweep2}  beam-sweeping with OFDM pilot symbols that provide distinct frequency tones which the receiver measures and reports to the base station is implemented, thus refining localization accuracy with spatial feedback, while in \cite{Jiang2023sweep}, a two-stage beam-sweeping process is applied that starts with broad sweeps that progressively focus as the receiver feeds back signal quality measurements. However, both beam-sweeping approaches rely on iterative feedback loops, resulting in increased latency and computational complexity, which limits their suitability for PWEs that require fast and precise adaptation. To address some of these challenges, the authors of \cite{Gridding1,Gridding2, Gridding3} presented a gridding strategy that divides the service area into discrete grids, each associated with unique received signal strength (RSS) values, enabling a straightforward localization method by correlating the position of the receiver with the RSS value. However, the effectiveness of this method depends on the precise power allocation and unique RSS values across the grids, which requires extensive RIS coordination and frequent recalibration to account for environmental changes, adding complexity. Thus, while RIS-based localization approaches can assist in RIS configuration, they are susceptible to inaccuracies arising from hardware imperfections in the RIS circuitry, where such imperfections can lead to deviations in the induced phase shifts, undermining the localization precision. Consequently, such approaches may not represent the most favorable solution for achieving high-accuracy localization within PWEs.

Despite the promising capabilities of RIS-based localization, stand-alone systems dedicated solely to positioning can be instrumental in adapting PWEs to user-specific needs. In this context, the use of light-emitting diodes (LEDs) as optical anchors represents a strategic solution for high-precision localization that offers several distinct advantages. In particular, unlike RF-based approaches that can be limited by multipath effects, optical anchors can use their predominantly line-of-sight (LoS) channel to provide accurate location data through their multiple inherent degrees of freedom, such as signal intensity and angular characteristics, which are not significantly affected by scattering \cite{ghassemlooy2019optical}. In addition, the integration of LED-based systems into existing lighting infrastructures supports compatibility with modern environments and scalable localization. In this direction, the authors of \cite{kim_indoor_2013} introduced a trilateration technique based on optical RSS measurements, achieving a positioning accuracy of 2.4 cm at a fixed height. However, this two-dimensional positioning approach lacks the spatial accuracy required for precise RIS configuration, reducing its impact on effective EM functionalities. More recently, the authors of \cite{chaudhary_indoor_2021} proposed a visible light positioning system with tilted transmitters and accounting for non-LoS components leading to centimeter-level accuracy. Similarly, the authors of \cite{Wei2023ILAC} developed a federated learning model that uses location data to optimize network performance, demonstrating how accurate localization can support communication efficiency. However, both systems \cite{chaudhary_indoor_2021, Wei2023ILAC} assume a fixed receiver orientation, which limits positioning flexibility in PWEs. To address this, the authors of \cite{yasir_indoor_2016} used an accelerometer with tilted receivers to estimate 
the user orientation and exploited this measurement to achieve 6 cm positioning accuracy in three-dimensional space. However, while effective, this approach relies on accelerometer-equipped receivers, potentially limiting its general applicability when specialized equipment is not available.

Overall, existing work has demonstrated that optical-based positioning offers promising localization accuracy and robustness. However, a common limitation of these solutions is the assumption that the receivers have a static and known orientation, which restricts their ability to provide consistently accurate localization across varying receiver orientations and thus limits their suitability for PWEs. To address the orientation challenge, various machine learning-based methods, such as those in \cite{Haas2021ML, Haas2024ML}, have introduced joint position and orientation estimation through fingerprinting techniques. While effective, these approaches are typically black-box models with high computational complexity that provide limited explainability and guidance for developing LED placement strategies, and are often highly specific to the environment in which they were trained. Finally, conventional LED placements, such as ceiling LEDs, limit the range of positions and orientations that can be accurately identified, highlighting the need for an optimized deployment strategy that enables robust and accurate positioning for any user orientation. 

\subsection{Motivation and Contribution}
To the best of the authors' knowledge, no existing work has proposed a comprehensive LED deployment strategy aimed at achieving accurate user localization across arbitrary receiver orientations. Moreover, the potential to leverage optical signals for optimizing RIS configuration remains largely unexplored in the literature, opening a novel direction for precise localization within PWEs. Finally, by exploiting the synergy between RIS deployment and LED placement, the overall deployment strategy can be further refined, leading to significant improvements in PWE performance.

Motivated by the above, in this paper, we introduce an LED-based RIS-assisted PWE architecture and propose a novel LED deployment strategy to facilitate highly accurate localization based on optical RSS measurements for receivers with arbitrary orientations, allowing real-time updates of RIS configurations. Specifically, our contributions are as follows:
\begin{itemize}
    \item We present an innovative optical wireless localization scheme based on a novel LED deployment strategy for PWEs that combines conventional ceiling LEDs with \textbf{\textit{Light-emitting RISs (LeRISs)}}, i.e., RISs equipped with LEDs. Specifically, the LeRIS design enables the RIS to perform dual functions, simultaneously supporting RF wave manipulation and emitting optical signals to improve localization by introducing additional optical signal paths. By dynamically selecting the optimal LEDs from both the ceiling and the LeRIS, the PWE achieves robust localization with high spatial resolution and precise positioning across different user locations and orientations.
    \item We derive a system of closed-form equations for the coordinates of the receiver regardless of its orientation. In addition, we establish rigorous geometric constraints to ensure the solvability and uniqueness of the system of equations, providing essential guidance for LED placement within the PWE and on the LeRIS.
    \item We analyze the localization error of the proposed optical RSS-based localization method for various receiver orientations, assessing the influence of Lambertian emission order and non-Los interference on localization accuracy.
    \item To evaluate the performance of the proposed LATC scheme, extensive simulations are performed, focusing on the spectral efficiency in a communication scenario where a PWE user is served by a RIS performing beam steering . Our results show that the proposed LATC scheme can be effectively used in PWEs consisting of RISs with a large number of reflecting elements, as it allows for improved beam control. Furthermore, our results reveal that the proposed LATC scheme can compensate for RIS hardware imperfections and uncertainties in the RIS position due to the high localization accuracy achieved, underscoring its practical applicability in PWEs.
\end{itemize}
\subsection{Structure of Paper}\label{related}
The rest of the paper is organized as follows. The system model is described in Section \ref{SM}, while the equations for localization-driven beam steering are presented in Section \ref{S3}. Furthermore, the mathematical analysis of the proposed localization scheme is provided in Section \ref{S4}, and simulation results are presented in Section \ref{S5}. Finally, Section \ref{S6} concludes the paper.

\begin{figure*}[h]
\centering
\includegraphics[trim=0 0 0 0, clip, width=0.95\textwidth]{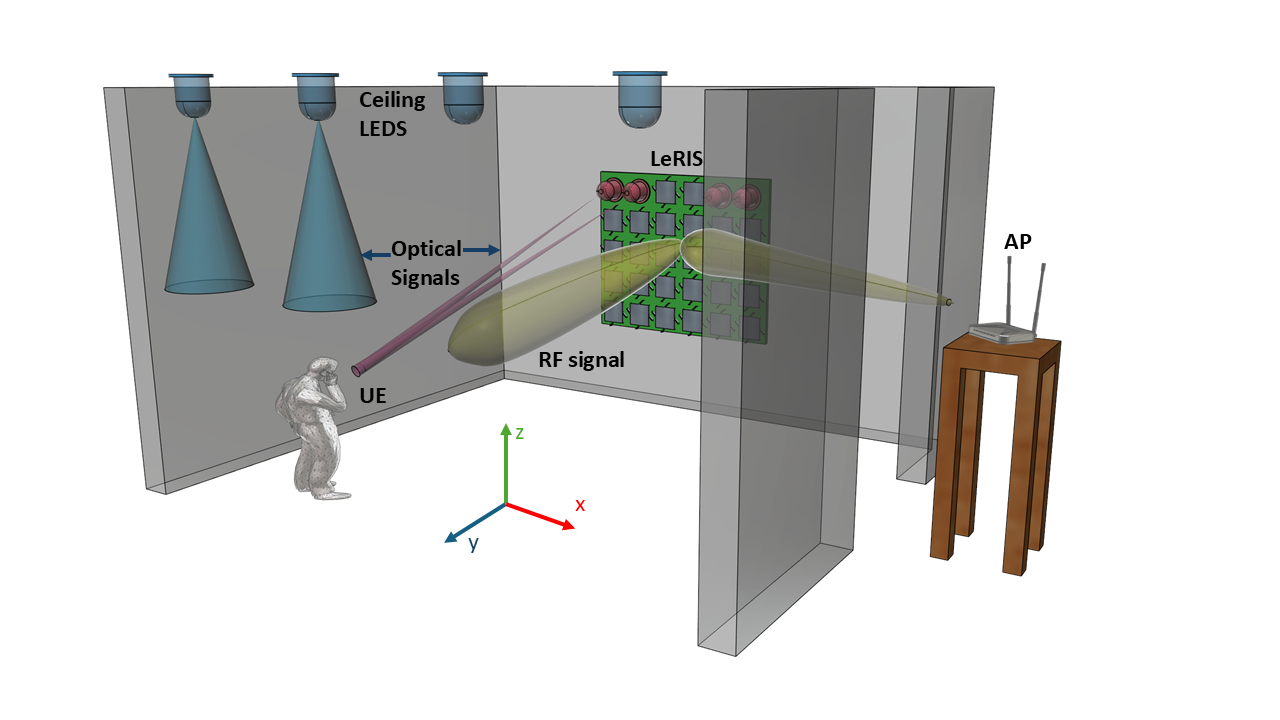}
\vspace{-2mm}
\caption{Programmable wireless environment with ceiling LEDs and Light-emitting RIS (LeRIS).} 
\label{fig:sm}
\vspace{-7mm}
\end{figure*}

\section{System Model}\label{SM}
In this section, we introduce the concept of an LED-empowered PWE using ceiling-mounted LEDs and LeRISs to optimize both the localization and communication processes. We then focus on the details of how precise localization of the PWE user is achieved using optical signals from ceiling-mounted LEDs and LeRISs, and how this information is used to ensure high-performance PWE service delivery.
\subsection{LED-empowered PWE}


We consider a PWE-assisted communication scenario, as shown in Fig. \ref{fig:sm}, where a single-antenna access point (AP) transmits within a PWE to serve a single-antenna user equipment (UE) that requires extremely high data rates but cannot establish a direct link to the AP due to blockage. To meet this requirement, the PWE orchestrator configures a RIS deployed on a wall in the PWE, assumed to be in the $xz$ plane and centered at the point $O \equiv (x_c, 0, z_c)$. We use an LED-based LATC scheme to direct the AP's transmission towards the UE with high directivity, ensuring that the UE receives the required data rate over a robust wireless link. Furthermore, multiple LEDs are strategically deployed throughout the PWE to provide an optical-based localization framework tailored for indoor scenarios. In more detail, the deployed LEDs operate at different frequencies in the infrared (IR) spectrum to avoid interference with visible light. However, in addition to conventional ceiling LEDs, multiple LEDs are placed directly on the available RISs, giving rise to the LeRIS concept. Thus, by leveraging conventional optical equipment such as cameras or photodiodes (PDs), this proposed PWE architecture can dynamically track the position of the UE and seamlessly adapt the RIS configuration to ensure high-performance PWE services.

\subsection{LATC Scheme for LED-powered PWEs}
As discussed above, accurate localization is crucial for precise RIS configuration, which is essential for ensuring optimal PWE performance. In the proposed approach, an LED-based LATC scheme minimizes configuration errors by leveraging the strong LoS characteristics of optical channels provided by both ceiling-mounted LEDs and LeRISs. This process begins with the AP broadcasting a beacon signal to announce the availability of the PWE, while each LED emits signals at a unique frequency within the IR spectrum. These distinct frequencies enable the UE to differentiate individual LEDs and calculate its position. Once the position is determined, the UE reports its location information to the AP through RF transmission, a process facilitated by configuring the RIS to operate as a randomly reconfigurable surface that scatters the transmitted RF signals randomly \cite{RRS}. This random scattering enhances the probability of signal reception at the AP, which is further supported by the established LoS channel between the AP and the RIS. Notably, the location data transmitted by the UE is minimal in size, reducing the need for a high-capacity communication channel. Upon receiving the reported location data, the AP selects the optimal RIS configuration to provide the required PWE service to ensure efficient, seamless, and highly accurate service delivery within the PWE.



\section{Localization-Driven Beam Steering}\label{S3}

By obtaining the UE position using an LATC scheme, the next step is to configure the RIS to steer the wavefront towards the UE. For example, in scenarios where high data rates are required, accurate steering of the transmitted signals becomes critical to maximize the received SNR, thereby optimizing the achievable spectral efficiency for the specific UE position. Thus, by utilizing the knowledge of the UE position and considering a RIS comprising $M \times N$ square reflecting elements with side length $D$, the achievable gain in the direction $(\theta, \phi)$, when performing beam steering, is given by \cite{cui}
\begin{equation}
    G \left(\theta, \phi \right) = {\eta_{\mathrm{eff}}}  \dfrac{4 \pi \lvert F\left(\theta, \phi \right)\rvert ^2}{\int_{0}^{2\pi} \int_{0}^{\frac{\pi}{2}}\lvert F\left(\theta, \phi \right)\rvert ^2 \mathrm{sin}\left(\theta\right) d\theta d\phi},
\end{equation}
where $\theta$ and $\phi$ are the elevation and azimuth angles, respectively, $\eta_{\mathrm{eff}}$ represents the RIS efficiency, and $F\left(\theta, \phi \right)$ is the far-field function scattered by the RIS, expressed as \cite{Abadal}
\begin{equation}\label{farfield}
    F\left(\theta, \phi \right) = \sum_{m=1}^{M} \sum_{n=1}^{N} e^{j \left(k_0 \zeta_{mn}\left(\theta,\phi\right) + \omega_{mn} + \Phi_{mn}\right)}.
\end{equation}
Here, $k_0= \frac{2 \pi}{\lambda}$ represents the wavenumber, where $\lambda$ denotes the wavelength, and $\zeta_{mn}\left(\theta,\phi\right)$ is the relative phase shift of the elements with respect to the radiation pattern coordinates, which is given as \cite{Abadal}
\begin{equation}
\begin{split}
    \zeta_{mn}\left(\theta,\phi\right) & \!=\!  D \mathrm{sin}\left(\theta\right) \left[ \left(m-\frac{1}{2}\right)\mathrm{cos}\left(\phi\right) + \left(n-\frac{1}{2}\right)\mathrm{sin}\left(\phi\right)\right] \\
    & \qquad + x_c \mathrm{sin}\left(\theta\right) \mathrm{cos}\left(\phi\right) + z_c\cos(\theta),
\end{split}
\end{equation}
Moreover, $\omega_{mn}$ is the phase impinging upon the $(m,n)$-th reflecting element and is given by \cite{Abadal}
\begin{equation}
\begin{split}
    &\omega_{mn} = k_0 \Bigg(\left(x_{\mathrm{AP}}- D \left(m-\frac{1}{2}\right) \mathrm{sin}\left(\theta\right)\mathrm{cos}\left(\phi\right) - x_c\right)^2 \\& + y^2_{\mathrm{AP}}+\left(z_{\mathrm{AP}}- D \left(m-\frac{1}{2}\right) \mathrm{sin}\left(\theta\right)\mathrm{sin}\left(\phi\right) - z_c\right)^2 \Bigg)^{1/2} ,
\end{split}
\end{equation}
where $x_{\mathrm{AP}}$, $y_{\mathrm{AP}}$, and $z_{\mathrm{AP}}$ are the AP coordinates. Furthermore, in \eqref{farfield}, $\Phi_{mn}$ is the phase shift induced by the $(m,n)$-th element, which depends on the desired RIS functionality \cite{cui}. Thus, according to the beam steering  principles described in \cite{Abadal}, $\Phi_{mn}$ is selected to enable steering in a specific direction using estimated angles obtained by the localization process. Thus, for an estimated direction (\(\hat{\theta}_r\), \(\hat{\phi}_r\)), the phase shift \(\Phi_{mn}\) induced by the $(m,n)$-th reflecting element is given by
\begin{equation}\label{phi_est}
\begin{split}
     \Phi_{mn}  = &- k_0 D \left(m \cos(\hat{\phi}_r) \sin(\hat{\theta}_r) + n \sin(\hat{\phi}_r)\sin(\hat{\theta}_r) \right) 
     \\& - \omega_{mn} - \Tilde{\theta}_{\mathrm{hw}},
\end{split}
\end{equation}
where $\Tilde{\theta}_{\mathrm{hw}}$ is a Von Mises distributed random variable with mean $\mu_{\mathrm{hw}}=0$ and concentration parameter $\kappa_{\mathrm{hw}} \in [0, +\infty)$ that quantifies the impact of RIS hardware imperfections on $\Phi_{mn}$ \cite{Justin}.

Since the direction of the UE cannot be perfectly obtained from the localization process, due to potential uncontrollable scattering of optical paths and noise effects at the receiver's optical equipment, the estimated direction of the UE may not be perfectly accurate, which may result in incorrect RIS configuration. Specifically, given that the phase shift $\Phi_{mn}$ is set according to \eqref{phi_est} based on the estimated direction $(\hat{\theta}_r, \hat{\phi}_r)$, the gain provided by the RIS to the UE whose actual direction is $(\theta_u, \phi_u)$ is given by \eqref{gain_UE} shown at the top of the next page.
\begin{figure*}
	\begin{equation}\label{gain_UE}
    G \left(\theta_u, \phi_u \right) = {\eta_{\mathrm{eff}}}  \dfrac{4 \pi \left| \sum_{m=1}^{M} \sum_{n=1}^{N} e^{j \left(\omega_{mn} + k_0 \zeta_{mn}\left(\theta_u,\phi_u\right) + \Phi_{mn}\right)}\right| ^2}{\int_{0}^{2\pi} \int_{0}^{\frac{\pi}{2}}\left| \sum_{m=1}^{M} \sum_{n=1}^{N} e^{j \left(\omega_{mn} + k_0 \zeta_{mn}\left(\theta,\phi\right) + \Phi_{mn}\right)}\right| ^2 \mathrm{sin}\left(\theta\right) d\theta d\phi},
	\end{equation}
	\hrule
\end{figure*}
Consequently, the received signal $y$ at the UE can be modeled as
\begin{equation} \label{rec_sig}
    y =  \sqrt{l_p G_t G_r P_t A_{\mathrm{eff}}  G \left(\theta_u, \phi_u \right)}x + w_n,
\end{equation}
where $x$ is the transmitted signal, assumed to satisfy $\mathbb{E}[|x|^2]=1$ with $\mathbb{E}[\cdot]$ denoting expectation and $w_n$ is additive white Gaussian noise with zero mean and variance $\sigma_n^2$. Additionally, $G_t$ and $G_r$ are the AP and UE antenna gains, respectively. Moreover, $P_t$ is the AP transmit power, and $A_{\mathrm{eff}}$ is the RIS effective area, which is given by
\begin{equation}
    A_{\mathrm{eff}}= \frac{M N G_e \lambda^2}{4 \pi},
\end{equation}
where $G_e$ denotes the element gain, and $l_p=\left(\frac{C_0}{d_1 d_2}\right)^{2}$ captures the total path loss. Here, $C_0=\frac{\lambda^2}{\left(4 \pi d_0 \right)^2}$ represents the path loss at reference distance $d_0$, while $d_1$ and $d_2$ are the AP-RIS center and RIS center-UE distance, respectively \cite{RRS}. Thus, leveraging \eqref{rec_sig}, the achievable spectral efficiency of the system can be expressed as
\begin{equation} \label{SE}
    R = \mathrm{log}_2 \left(1 + \frac{l_p G_t G_r P_t A_{\mathrm{eff}}  G \left(\theta_u, \phi_u\right)}{\sigma_n^2} \right) .
\end{equation}
From \eqref{gain_UE} and \eqref{SE}, we observe that the gain provided by the RIS, denoted as $G\left(\theta_u, \phi_u\right)$, significantly affects the achievable rate of the UE. A critical factor affecting $G\left(\theta_u, \phi_u\right)$ are the hardware imperfections inherent to the RIS circuitry, which can prevent the reflecting elements from applying the required phase shifts \cite{Justin}. In addition, inaccuracies in the knowledge of the impinging phases, represented by $\omega_{mn}$, induced by imperfect information about the RIS position, can further reduce the gain. However, while dealing with hardware imperfections and uncertainties in the impinging phases is important, the accuracy of the localization process plays an even more critical role. In particular, without accurate knowledge of the UE location, described by angles $\theta_u$ and $\phi_u$, the RIS will steer the wavefront in the wrong direction, potentially causing a significant degradation in the QoS. This highlights the importance of developing a novel LATC scheme that can minimize the effects of stochastic phenomena in the localization process, thus ensuring robust performance in scenarios where precise RIS operation is required.

\section{Optical RSS-based LATC Scheme} \label{S4}

Due to the lower susceptibility of optical wireless systems to uncontrolled scattering, several inherent degrees of freedom of optical signals can be harnessed for localization. Among the available techniques, the RSS method emerges as an effective solution for optical-based localization due to its simplicity and minimal hardware requirements \cite{Zafari2019HW}. Specifically, RSS-based localization utilizes the attenuation of optical signal strength over distance, and by correlating these values with known points in space, the system can estimate the distance between an LED and the UE. Moreover, considering that the attenuation of optical signals in indoor environments follows a reliable and predictable pattern due to the strong LoS paths, optical RSS measurements can be particularly robust for localization. As a result, the integration of optical RSS into the proposed LATC scheme enables effective RIS configuration to deliver the required accuracy in PWE services.

\subsection{System Model for Optical Localization}

Leveraging the advantages of optical RSS-based localization, the proposed LATC scheme uses optical signals emitted by multiple LEDs which are located either on the ceiling or on the LeRISs. While the environment may contain multiple LEDs, the UE receives optical signals exclusively from the subset of LEDs denoted by $\mathcal{I} = \{1, \dots, I \}$, which includes only those LEDs with which the UE shares a LoS link. Thus, the channel gain from the $i$-th LED to the UE, where $i \in \mathcal{I}$, is modeled by
\begin{equation}
    h_{i} = h_{i,\mathrm{L}} + h_{i,\mathrm{N}},
\end{equation}
where $h_{i,\mathrm{L}}$ represents the LoS component of the channel, while $h_{i,\mathrm{N}}$ accounts for the non-LoS component due to uncontrollable scattering effects. More specifically, according to \cite{Bozanis2022}, $h_{i,\mathrm{L}}$ is expressed as
\begin{equation}\label{h_los}
h_{i,\mathrm{L}}=\frac{(m_l+1)A_{\mathrm{PD}} T_{f}}{2\pi d^2_{l,i}} \cos^{m_l}(\psi_{d,i})g_{c}(\psi_{a,i})\cos(\psi_{a,i}),
\end{equation}
where $m_l$ is the Lambertian emission order, which indicates how directional the optical beam is. Additionally, $\Psi_{1/2} = \mathrm{cos}^{-1} \left( e^{-\frac{m_l}{\ln(2)}} \right)$ is the half-power beamwidth of the LED, $d_{l,i}$ is the Euclidean distance between the $i$-th LED and the UE, and $\psi_{d,i}$ and $\psi_{a,i}$ are the departure and arrival angles, respectively. Furthermore, parameter $T_{f}$ denotes the gain of the optical filter, $A_{\mathrm{PD}}$ is the area of the photodiode (PD), and $g_{c}(\psi_{a,i})$ is the optical concentrator gain of the PD, which is given by \cite{Bozanis2022}
\begin{equation}\label{16}
g_{c}(\psi_{a,i})=\left\{
\begin{array}{ll}
\frac{n_{c}^2}{\sin^2(\Psi_{\mathrm{max}})}, & 0\leq \psi_{a,i} \leq \mathrm{\Psi_{\mathrm{max}}} \\
0, & \psi_{a,i}>\mathrm{\Psi_{\mathrm{max}}},
\end{array}
\right.
\end{equation}
where $n_{c}$ is the refractive index of the PD, quantifying how much light is deflected when it enters the PD, and $\mathrm{\Psi_{\mathrm{max}}}$ is half the field of view (FoV) of the PD. In addition to the LoS component, $h_{i,\mathrm{N}}$ is affected by reflections from objects and walls, which cannot be modeled uniformly due to their different shapes and material properties. Therefore, to accurately capture optical scattering effects, reflective surfaces are divided into smaller sections that follow the Lambertian reflectance model, which assumes that light is diffusely reflected with intensity dependent on the angle of incidence \cite{Hass2018NLoS}. This approach accounts for both angular diversity and different reflective properties in different sections of objects, improving the accuracy of the non-LoS channel model. Thus, by focusing on first-order reflections, since subsequent reflections experience significant signal degradation and become negligible \cite{Hass2018NLoS}, $h_{i}^{\mathrm{N}}$ can be calculated by summing all individual reflection paths as follows
\begin{equation}
    h_{i,\mathrm{N}} = \sum_{j=1}^Jh_{ij}^{\mathrm{RP}},
\end{equation}
where $J$ is the total number of non-LoS paths, while $h_{ij}^{\mathrm{RP}}$ is the $j$-th first-order reflection path from the $i$-th LED and is given by
\begin{equation}
\begin{split}
    h^{\mathrm{RP}}_{i,j} & = \int_{\mathcal{S}}   \frac{\rho_r (m_l+1)A_{\mathrm{PD}} T_{f}}{2\pi d^2_{r,1}d^2_{r,2}}   \cos^{m}(\phi) \\ 
    & \qquad \times \cos(\psi_1) \cos(\psi_2)g_{c}(\psi) 
   \cos(\psi)\mathrm{d}\mathcal{S}
\end{split}
\end{equation}
where $d_{r,1}$ is the distance between the LED and the reflecting object, $d_{r,2}$ is the distance from the reflecting object to the UE, $\rho_r$ denotes the reflectance of the object, and $\psi_1$ and $\psi_2$ are the angles of incidence on the object and exit from the object to the UE, respectively. Here, $\mathcal{S}$ is the total reflective surface to account for all the infinitesimal contributions of the first-order reflection path.

\subsection{RSS-based Positioning}

To determine the position $U = (x_u, y_u, z_u)$ of the UE in the PWE, the RSSs from multiple LEDs can be used because the signal strength attenuation can provide valuable information about the distance. Specifically, the power $P_{r,i}$ measured at the PD of the UE from the optical signal transmitted by the $i$-th LED in the optical domain is given by
\begin{equation}\label{RSS_1}
    P_{r,i} = \underbrace{h_{i,\mathrm{L}}P_{o,i}}_{P_\mathrm{LoS},i} + \underbrace{h_{i,\mathrm{N}}P_{o,i}}_{P_\mathrm{NLoS},i} + P_{n,i},
\end{equation}
where $P_{o,i}$ is the transmitted optical power of the $i$-th LED, $P_{\mathrm{LoS},i}$ and $P_{\mathrm{NLoS},i}$ are the received powers from the LoS and non-LoS components, respectively, and $P_{n,i}$ denotes the power of the noise, which arises from thermal noise, dark current, and background radiation \cite{ghassemlooy2019optical}. 

In an ideal scenario, where the channel gain is only affected by the LoS component, based on \eqref{RSS_1} the distance between the $i$-th LED and the UE can be calculated using \eqref{h_los} and the ratio $\frac{P_{r,i}}{P_{o,i}}$. In practice, however, the received power $P_{r,i}$ is also affected by the non-LoS component and noise, which introduces errors for distance estimation. Nevertheless, in optical wireless systems, the LoS component generally dominates over the non-LoS and noise components \cite{ghassemlooy2019optical}, which allows us to neglect the $P_\mathrm{NLoS}$ and $P_n$ terms in \eqref{RSS_1}, thus providing a good estimate of the channel gain as 
\begin{equation}\label{h_est} \hat{h}_{i,\mathrm{L}} = \frac{P_{r,i}}{P_{o,i}}. 
\end{equation}
By assuming, without loss of generality, that the transmission direction of optical signals is considered to be the positive y-axis, it means that the LEDs are placed on a plane parallel to $xz$, i.e., $y=y_i$. Thus, the $\psi_{d,i}$ from the $i$-th LED located at $(x_i,y_i,z_i)$ to the UE can be expressed as 
\begin{equation}\label{AoD}
    \psi_{d,i} = \cos^{-1}{\left(\frac{y_u - y_i}{d_{l,i}}\right)}.
\end{equation}
 It should be highlighted that this formulation holds for LEDs on other planes as well, with the appropriate coordinates substituted in the numerator of \eqref{AoD} to reflect their respective direction of transmission.
Therefore, by substituting \eqref{h_est} and \eqref{AoD} in \eqref{h_los} and after some algebraic manipulations, we can express the estimated distance $\hat{d}_{l,i}$ between the $i$-th LED and the UE as 
\begin{equation}
\begin{split}\label{d_l}
    \hat{d}_{l,i} & = \Bigg(\frac{P_{o,i}}{P_{r,i}}\frac{(m_l+1)A_{\mathrm{PD}}\left(\hat{y}_u-y_i\right)^{m_l}}{2\pi} \\
    & \qquad \qquad \times T_{f}g_{c}(\psi_{a,i})\cos(\psi_{a,i})\Bigg)^{1/(m_l+2)}.
    \end{split}
\end{equation}
It should be noted that in \eqref{d_l}, if $P_\mathrm{LoS,i}$ was used in place of $P_{r,i}$, the calculated distance would represent the actual value rather than an estimate. Finally, by substituting $\hat{d}_{l,i}$ with its Cartesian coordinate form and some algebraic manipulations, \eqref{d_l} can be rewritten as
\begin{equation} \label{coor}
\begin{aligned}
    &\frac{(\hat{x}_u-x_i)^2 + (\hat{y}_u-y_i)^2 + (\hat{z}_u - z_i)^2}{(\hat{y}_u-y_i)^{\frac{2m}{m_l+2}}}  \\ 
    & \qquad = \left (\frac{P_{o,i}}{P_{r,i}} \frac{(m_l+1)A_{\mathrm{PD}}}{2\pi}T_{f}g_{c}(\psi_{a,i})\cos(\psi_{a,i})\right)^{2/(m_l+2)} .
\end{aligned}
\end{equation}

Examining \eqref{coor}, it becomes clear that a system of equations is needed to determine the position of the UE. Although three variables corresponding to the coordinates of the UE are required to define its position, three equations alone are insufficient due to the non-linear nature of the system. Therefore, it has been proven in \cite{Bozanis2023Loc} that for a fixed UE orientation, a fourth equation is required to resolve these non-linearities and ensure a unique and accurate solution to the system. These four equations can be derived from the RSS measurements of four LEDs, with each LED providing an equation corresponding to its distance from the UE, as expressed in \eqref{coor}. However, the random orientation of the UE introduces additional complexity regarding the unknown angles $\psi_{a,i}$ from each LED. Specifically, with $I$ LEDs, the total number of unknowns becomes $I+3$, as each LED introduces a unique angle of incidence $\psi_i$ in addition to the unknown UE coordinates, thereby increasing the minimum number of equations needed to ensure a unique solution. Thus, it is necessary to derive additional equations to provide more information about the system variables without increasing the number of unknowns. Furthermore, the random orientation of the UE can lead to different $\psi_{a,i}$ values, which can lead to linear dependencies between the equations and make the system unsolvable. Consequently, the LEDs must be carefully placed to ensure that the equations remain linearly independent even with random $\psi_{a,i}$ values. In this context, the following lemma establishes the constraints on the LED coordinates, which depend on the UE orientation, ensuring effective localization with the minimum possible number of LEDs.
\begin{lemma}\label{l1}
The UE position can be determined using the optical RSS-based method if it has LoS links with four LEDs placed on a plane, provided that no two LEDs are equidistant from the UE when this plane is parallel to the UE plane.



\end{lemma}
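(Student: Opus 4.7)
My plan is to reduce the four RSS equations of the form~\eqref{coor} to a tractable system by exploiting the common geometric structure that coplanarity imposes on the Lambertian emission term, and then to handle the degenerate parallel-plane configuration separately via the equidistance hypothesis. The observation that drives everything is that, with all four LEDs sitting on the common plane $y=y_0$, the factor $\cos^{m_l}(\psi_{d,i})\,d_{l,i}^{m_l}=(y_u-y_0)^{m_l}$ is identical across the four equations, which suggests absorbing it into a single auxiliary vector and eliminating all four per-LED angles $\psi_{a,i}$ in one stroke.

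Concretely, I would substitute $\cos(\psi_{a,i})=\mathbf{n}_u\cdot(\mathbf{r}_i-\mathbf{r}_u)/d_{l,i}$ and introduce the auxiliary vector $\mathbf{q}=(y_u-y_0)^{m_l}\mathbf{n}_u$, so that each RSS equation rewrites as the scalar relation $A_i\,d_{l,i}^{m_l+3}=\mathbf{q}\cdot(\mathbf{r}_i-\mathbf{r}_u)$, where $A_i$ is directly computable from the measured $P_{r,i}/P_{o,i}$ and the known constants of~\eqref{h_los}. Because all $\mathbf{r}_i$ share the same $y$-coordinate, gathering terms leaves a $4\times 3$ linear system in the unknowns $(q_x,\,q_z,\,q_y y_0-\mathbf{q}\cdot\mathbf{r}_u)$, with coefficients that depend only on the LED positions and on the unknown distances $d_{l,i}(\mathbf{r}_u)$. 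I would then write the consistency condition of this over-determined linear system and combine it with the norm identity $\|\mathbf{q}\|^2=(y_u-y_0)^{2m_l}$ inherited from $\|\mathbf{n}_u\|=1$, with the aim of producing enough algebraic constraints in $\mathbf{r}_u$ alone to pin down the UE position uniquely.

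The singular case handled by the equidistance hypothesis arises when the UE photodiode plane is parallel to the LED plane, so that $\mathbf{n}_u$ is collinear with the LED-plane normal. Then $\psi_{a,i}=\psi_{d,i}$ for every $i$, the orientation dependence in the RSS drops out, and each equation reduces to a pure distance relation of the form $d_{l,i}^{m_l+3}\propto(y_u-y_0)^{m_l+1}P_{o,i}/P_{r,i}$, so the ratios $d_{l,i}/d_{l,j}$ can be read off immediately from the measurements. The four sphere equations $d_{l,i}^{2}=\|\mathbf{r}_i-\mathbf{r}_u\|^2$ with a common unknown scale then determine $\mathbf{r}_u$ uniquely, provided that no two of those spheres have identical radii, which is precisely what the assumption that no two LEDs are equidistant from the UE rules out. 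I would explicitly verify that equidistance is the only way the rank of this reduced sphere system can drop below four.

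The hard part will be showing rigorously that, in the generic non-parallel configuration, the consistency condition from the reduced $4\times 3$ system combined with the norm identity really does cut the joint solution set down to a single $\mathbf{r}_u$, rather than leaving a residual one-parameter curve coming from the remaining orientation freedom. Since the dependence on $\mathbf{r}_u$ enters only nonlinearly through the powers $d_{l,i}^{m_l+3}$, a careful Jacobian or resultant argument will likely be needed to confirm that coplanarity of the four LEDs is precisely the structural feature that makes the fourth equation informative rather than linearly dependent on the first three, and that the parallel-plane configuration is indeed the sole residual degeneracy the equidistance condition must exclude.
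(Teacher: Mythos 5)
Your reduction is set up correctly -- writing $\cos(\psi_{a,i})=\mathbf{n}_u\cdot(\mathbf{r}_i-\mathbf{r}_u)/d_{l,i}$, absorbing the common Lambertian factor into $\mathbf{q}=(y_u-y_0)^{m_l}\mathbf{n}_u$, and obtaining $A_i d_{l,i}^{m_l+3}=\mathbf{q}\cdot(\mathbf{r}_i-\mathbf{r}_u)$ is a clean and genuinely different linearization from what the paper does -- and your treatment of the parallel-plane case is essentially the paper's final step: there $\psi_{a,i}=\psi_{d,i}$, the orientation drops out, and the only obstruction to a unique solution is a pair of equal radii, which is exactly what the equidistance proviso excludes (cf.\ \eqref{dist1}--\eqref{dist3}). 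However, the general-orientation case, which is the bulk of the lemma, is left open, and your own accounting shows why the proposed closure cannot work as stated: after eliminating the angles you have four RSS relations plus the single norm identity $\|\mathbf{q}\|^2=(y_u-y_0)^{2m_l}$, i.e.\ five scalar constraints on the six unknowns $(\mathbf{r}_u,\mathbf{q})$. The "residual one-parameter curve" you flag is therefore not a technicality to be dispatched by a Jacobian or resultant computation -- it is what the degree-of-freedom count generically predicts, and no amount of algebraic care will remove it without injecting additional equations.

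The paper supplies precisely those missing equations: the three geometric relations \eqref{geo01}--\eqref{geo03} tying the ratios $\cos(\psi_{a,1})/\cos(\psi_{a,i})$ to the distance ratios $\hat{d}_{l,i}/\hat{d}_{l,1}$, which bring the system to seven equations in the seven unknowns (three coordinates and four incidence angles). It then eliminates pairwise to get the ratios $\alpha_1,\alpha_2,\alpha_3$, derives the closed forms \eqref{l}, \eqref{x}, \eqref{z}, and reads off the solvability conditions $\alpha_1\neq 1$, $\xi_2\neq 0$, $\alpha_1\neq\alpha_3$, $\alpha_3\neq 1$, i.e.\ \eqref{con1}--\eqref{con3}, which specialize to the pairwise-distance conditions exactly when the LED plane is parallel to the UE plane. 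Your proposal produces no analogue of these non-degeneracy conditions in the non-parallel case and no substitute for the three extra geometric constraints, so as written it establishes the lemma only in the parallel-plane configuration. To repair it you would either need to import relations of the type \eqref{geo01}--\eqref{geo03} into your $\mathbf{q}$-formulation (which would over-determine the $4\times 3$ linear system enough to isolate $\mathbf{r}_u$), or argue from some other source of information why the residual curve degenerates to a point.
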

\begin{IEEEproof}
The proof is presented in Appendix A.
\end{IEEEproof}

\begin{remark}
According to the proof of Lemma \ref{l1}, certain LEDs may be unsuitable for UE localization at specific UE positions and orientations. To address this, it is essential to have multiple LEDs strategically placed in different locations, ensuring that the UE can always perform accurate localization regardless of its position or orientation.
\end{remark}
\begin{remark}
    Because the UE position cannot be determined if the UE plane is parallel to a plane containing two or more LEDs and the UE is equidistant from any pair of these LEDs, it is crucial to place LEDs at multiple levels to avoid such scenarios. Therefore, by ensuring a diverse placement of LEDs, such as combining ceiling LEDs with LeRIS, the system can always select LEDs from different planes, reducing the likelihood of making the system of equations unsolvable. This guarantees that the UE plane cannot be simultaneously parallel to all the planes on which the LEDs are placed. Thus, LeRIS can play a critical role by offering additional optical paths and providing spatial diversity through LEDs placed on different planes, ensuring robust localization and improving overall PWE efficiency.
\end{remark}

By appropriately placing the LEDs according to Lemma \ref{l1}, we can obtain geometric equations that relate the LED positions to the ratio between the cosines of the angles of incidence from each LED, and thus are able to solve the system of equations. In this direction, a way to deploy them that satisfies these constraints is to place them all at the same height, i.e., $z=z_\mathrm{L}$, as shown in Fig. \ref{fig:sm}. More specifically, this arrangement satisfies the constraint of not having LEDs with the same $x$ coordinate, while it also gives the freedom to choose different distances between the LEDs on the $x$ axis, making the system more tolerant of potential equality in the distances between the LEDs and the UE. Therefore, in the following proposition, we derive the UE coordinates for the case where the LED positions coincide with the points $\mathrm{A} \equiv (x_\mathrm{A},0,z_\mathrm{L})$, $\mathrm{B} \equiv (x_\mathrm{B},0,z_\mathrm{L})$, $\mathrm{C} \equiv (x_\mathrm{C},0,z_\mathrm{L})$, and $\mathrm{D} \equiv (x_\mathrm{D},0,z_\mathrm{L})$.
\begin{proposition}
     If the optical RSS-based method is used and the LED positions coincide with the points $\mathrm{A}$, $\mathrm{B}$, $\mathrm{C}$, and $\mathrm{D}$, the estimated UE coordinates $\left(\hat{x}_u, \hat{y}_u, \hat{z}_u\right)$ can be calculated by \eqref{x}, \eqref{l}, and \eqref{z}, with $\alpha_1$, $\alpha_2$, and $\alpha_3$ given by
    \begin{align}
   \alpha_1 &= \left(\frac{{P_{r,2}}}{{P_{r,1}}}\right)^{2/\left(m_l+1\right)}, \label{geo1} \\   \alpha_2 &= \left(\frac{{P_{r,3}}}{{P_{r,1}}}\right)^{2/\left(m_l+1\right)}, \label{geo2}  \\
\alpha_3&= \left(\frac{{P_{r,4}}}{{P_{r,1}}}\right)^{2/\left(m_l+1\right)}.   \label{geo3} 
\end{align}
\end{proposition}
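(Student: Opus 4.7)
The plan is to specialize equation \eqref{coor} to the four coplanar LEDs $\mathrm{A}, \mathrm{B}, \mathrm{C}, \mathrm{D}$ and then combine the resulting equations through power ratios so that the unknown orientation-dependent quantities $\cos(\psi_{a,i})$ are eliminated, reducing the problem to a linear elimination over the three UE coordinates. First, I would substitute each LED position into \eqref{coor}. Because all four LEDs share $y_i = 0$ and $z_i = z_{\mathrm{L}}$, the four equations differ only through the entry $x_i$ on the left-hand side and through the orientation-dependent product $g_c(\psi_{a,i})\cos(\psi_{a,i})$ on the right-hand side, while the factor $(\hat{y}_u-y_i)^{2m_l/(m_l+2)} = \hat{y}_u^{2m_l/(m_l+2)}$ is identical across the four equations.

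Second, I would form the three ratios of the equations at $\mathrm{B}, \mathrm{C}, \mathrm{D}$ to the one at $\mathrm{A}$, raising each ratio to the exponent that appears in \eqref{geo1}--\eqref{geo3}. The crucial cancellation is that of the orientation-dependent factor: exploiting the identity $\cos(\psi_{a,i})\, d_{l,i} = \vec{n}\cdot(\vec{L}_i-\vec{U})$, whose right-hand side is affine in $x_i$ with a slope and intercept that do not depend on the LED index, the chosen exponent $2/(m_l+1)$ makes the $\cos(\psi_{a,i})$ and $g_c(\psi_{a,i})$ contributions drop out uniformly from the ratio, so that the right-hand side reduces to the measured power ratios $\alpha_1, \alpha_2, \alpha_3$ multiplied by explicit functions of the known LED abscissae and the unknown UE coordinates.

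Third, I would subtract the $\mathrm{A}$-equation from each of the $\mathrm{B}, \mathrm{C}, \mathrm{D}$ equations. The common contribution $\hat{y}_u^2 + (\hat{z}_u-z_{\mathrm{L}})^2$ cancels, and the differences on the left become linear in $\hat{x}_u$ with coefficients built from the pairwise gaps $x_k - x_{\mathrm{A}}$. Solving the resulting linear system yields the closed form \eqref{x} for $\hat{x}_u$. Back-substitution into one of the cleaned ratio identities recovers the auxiliary quantity \eqref{l}, which encapsulates the slant distance from the UE to the LED plane; an elementary manipulation of a single original distance equation then isolates the height $\hat{z}_u$ as in \eqref{z}.

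The main obstacle will be the second step: verifying in detail that the exponent $2/(m_l+1)$ is precisely the one that causes the orientation-dependent and concentrator-gain factors to cancel uniformly under the coplanarity hypothesis. This cancellation is the geometric heart of the result and relies on the placement constraint from Lemma \ref{l1}, which forces the four LEDs onto a common plane with distinct abscissae so that the dependence on the UE normal collapses into a shared linear form. Once this reduction is in hand, the rest is routine linear elimination, with uniqueness of the solution already guaranteed by Lemma \ref{l1}.
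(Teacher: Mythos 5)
Your overall strategy is the same as the paper's: specialize \eqref{coor} to the four coplanar LEDs, eliminate the unknown incidence angles $\cos(\psi_{a,i})$ through the geometry of the placement, and then fall back on the closed forms \eqref{x}, \eqref{l}, \eqref{z}. Note, however, that those closed forms are derived in Appendix A as part of Lemma \ref{l1}, so your third step merely restates material the proposition already cites; the entire new content of the proposition is the step you explicitly defer, namely showing that $\alpha_1,\alpha_2,\alpha_3$ as defined in \eqref{a1}--\eqref{a3}, which a priori depend on the unknown angles, collapse to pure power ratios with the stated exponent. Leaving that unverified is a genuine gap: without it the proposition is not proved. (The $g_c$ factor, by contrast, cancels trivially since \eqref{16} makes it constant inside the field of view.)

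The paper closes the gap as follows, and your stated mechanism is not quite the right one. You argue that $\cos(\psi_{a,i})\,d_{l,i}=\vec{n}\cdot(\vec{L}_i-\vec{U})$ is ``affine in $x_i$ with a slope and intercept that do not depend on the LED index,'' but an affine dependence with nonzero slope would make the projections differ across LEDs and the angles would \emph{not} cancel. What is actually used is that this projection is the \emph{same} for all four LEDs: because $\vec{L}_i-\vec{L}_j$ points along the $x$-axis, the feet of the perpendiculars from the LEDs onto the line through the UE along its normal coincide, which requires the orthogonality between that displacement and the UE normal implicit in the placement/orientation hypotheses. This gives the relations $\cos(\psi_{a,1})/\cos(\psi_{a,i})=\hat{d}_{l,i}/\hat{d}_{l,1}$, i.e., \eqref{geo01}--\eqref{geo03}; substituting \eqref{d_l} then yields a single scalar equation for the ratio $\hat{d}_{l,i}/\hat{d}_{l,1}$, whose closed-form solution is back-substituted into \eqref{a1}--\eqref{a3}. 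You must actually carry out this short computation, and it is not a formality: doing so with \eqref{geo01} exactly as written produces the exponent $2/(m_l+3)$, whereas the stated $2/(m_l+1)$ follows only from the reciprocal relation $\cos(\psi_{a,1})/\cos(\psi_{a,i})=\hat{d}_{l,1}/\hat{d}_{l,i}$. The verification you postponed is therefore precisely where the correctness of the exponent is decided, and your proposal cannot be accepted as a proof until it is done.
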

\begin{IEEEproof}
By expressing the incidence angles from the triangles formed between the LEDs, the UE, and the normal to the UE vector, the following expressions can be derived:
\begin{align}
   \frac{\cos(\psi_{a,1})}{\cos(\psi_{a,2})} &= \frac{\hat{d}_{l,2}}{\hat{d}_{l,1}}, \label{geo01} \\   \frac{\cos(\psi_{a,1})}{\cos(\psi_{a,3})} &= \frac{\hat{d}_{l,3}}{\hat{d}_{l,1}}, \label{geo02}  \\
\frac{\cos(\psi_{a,1})}{\cos(\psi_{a,4})} &= \frac{\hat{d}_{l,4}}{\hat{d}_{l,1}}.
   \label{geo03} 
\end{align}
Thus, utilizing \eqref{d_l}, \eqref{a1}, \eqref{a2} and \eqref{a3}, equations \eqref{geo01}, \eqref{geo02}, and \eqref{geo03} for the 4 different distances can be rewritten as \eqref{geo1}, \eqref{geo2}, and \eqref{geo3}. Thus, the UE coordinates can be obtained, which concludes the proof. 
\end{IEEEproof}
\begin{remark}
For the case where all 4 LEDs utilized for localization, are placed upon a plane that is parallel to the UE plane, i.e., the angles of incidence are equal to the angles of departure, i.e., $\psi_{d,i} = \psi_{a,i}, \forall i \in \{1,2,3,4\} $, it can be proved that \eqref{a1}, \eqref{a2}, and \eqref{a3} do not remain functions of these angles. Therefore, the solution of the problem can be derived directly from \eqref{z}, \eqref{x}, and \eqref{l} without using equations \eqref{geo01}, \eqref{geo02}, and \eqref{geo03}.
\end{remark}

Considering the existence of non-LoS components and noise, it becomes clear that the UE location cannot be acquired perfectly, because the estimated UE coordinates have been calculated based on the assumption that the optical channel is only affected by the LoS component. In this direction, in the following proposition, we provide the localization error of the optical RSS-based localization method.
\begin{proposition}
The localization error of the optical RSS-based method can be expressed as
    
\begin{equation}\label{error_dist}
 {\Delta d} = d_{l}\left(1 - \left(\frac{1}{1 + K_\mathrm{opt}^{-1} + \frac{P_{n}}{P_\mathrm{LoS}}}\right)^{1/(m_l+2)} \right),
\end{equation}
where $K_\mathrm{opt} = \frac{P_\mathrm{LoS}}{P_\mathrm{NLoS}}$ is the ratio of the power of the LoS and non-LoS components of the optical channel. 

\end{proposition}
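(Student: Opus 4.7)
The plan is to work directly from equation \eqref{d_l}, which expresses the estimated distance $\hat{d}_{l,i}$ in terms of the received power $P_{r,i}$, together with the observation (made explicitly just after \eqref{d_l}) that substituting the pure LoS power $P_{\mathrm{LoS},i}$ in place of $P_{r,i}$ would recover the true distance $d_{l,i}$. My strategy is to form the ratio $\hat d_{l,i}/d_{l,i}$ so that all geometric and hardware quantities (the Lambertian factor, $A_{\mathrm{PD}}$, $T_f$, $g_c$, $\cos(\psi_{a,i})$, and the $(\hat y_u - y_i)^{m_l}$ term under the ideal assumption that the angular terms are evaluated identically) cancel cleanly, leaving only the power ratio raised to the power $1/(m_l+2)$.

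First I would write the two distance expressions side by side: the estimate obtained from \eqref{d_l} and the ideal one in which $P_{r,i}$ is replaced by $P_{\mathrm{LoS},i}$. Taking their ratio yields
\begin{equation}
\frac{\hat d_{l}}{d_{l}} = \left(\frac{P_{\mathrm{LoS}}}{P_{r}}\right)^{1/(m_l+2)}.
\end{equation}
Next I would substitute the decomposition $P_{r} = P_{\mathrm{LoS}} + P_{\mathrm{NLoS}} + P_{n}$ from \eqref{RSS_1} into the denominator and factor $P_{\mathrm{LoS}}$ out. Using the definition $K_{\mathrm{opt}} = P_{\mathrm{LoS}}/P_{\mathrm{NLoS}}$ immediately gives
\begin{equation}
\frac{\hat d_{l}}{d_{l}} = \left(\frac{1}{1 + K_{\mathrm{opt}}^{-1} + P_{n}/P_{\mathrm{LoS}}}\right)^{1/(m_l+2)}.
\end{equation}

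Finally, I would define the localization error as $\Delta d = d_{l} - \hat d_{l}$ (the signed gap between the true and estimated distance, which here is positive because the extra non-LoS and noise contributions inflate $P_r$ and therefore shrink the estimate), factor out $d_{l}$, and substitute the ratio above to recover the claimed closed-form expression.

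The only subtlety worth flagging is the cancellation in the ratio: it implicitly assumes that the angular quantities $(\hat y_u - y_i)^{m_l}$, $g_c(\psi_{a,i})$, and $\cos(\psi_{a,i})$ evaluated from the estimated geometry agree with those evaluated from the true geometry, so that they drop out of $\hat d_l/d_l$. This is consistent with the paper's framing that the error under study is the one induced purely by the LoS-only approximation on the received power, with the orientation/angular dependence treated as fixed; I would state this assumption explicitly at the start of the argument so that the subsequent cancellation is unambiguous. Beyond that, the derivation is essentially algebraic and requires no further machinery.
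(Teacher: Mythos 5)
Your proof is correct and follows essentially the same route as the paper: both rely on the fact that replacing $P_{r}$ by $P_{\mathrm{LoS}}$ in the distance formula recovers the true distance, so that $\hat d_{l}/d_{l}=\left(P_{\mathrm{LoS}}/P_{r}\right)^{1/(m_l+2)}$, after which substituting $P_{r}=P_{\mathrm{LoS}}+P_{\mathrm{NLoS}}+P_{n}$ and factoring out $P_{\mathrm{LoS}}$ gives the claimed expression. If anything, your ratio-first formulation is slightly cleaner than the paper's intermediate step, which as literally written places the exponent $1/(m_l+2)$ outside the difference $1/P_{\mathrm{LoS}}-1/P_{r}$ rather than on each term separately; your derivation avoids that sloppiness while arriving at the same final result under the same implicit assumption (which you rightly flag) that the angular and geometric factors cancel in the ratio.
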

\begin{IEEEproof}
The localization error between the actual and estimated distance, utilizing \eqref{d_l}, can be expressed as
\begin{equation}\label{37}
\begin{split}
    {\mathrm{\Delta}d}  = & d_{l} - \hat{d}_{l} \\
     = &\Bigg( \frac{P_{o}(m_l+1)A_{\mathrm{PD}}\hat{y}_u^{m}}{2\pi}T_{f} g_{c}(\psi_{a})\cos(\psi_{a})\Bigg)^{1/(m_l+2)}  \\
     \quad \times & 
    \left(\frac{1}{P_{\mathrm{LoS}}}-\frac{1}{P_{r}}\right)^{1/(m_l+2)}.\\
\end{split}
\end{equation}
Finally, after some algebraic manipulations in \eqref{37}, then \eqref{error_dist} can be derived, which completes the proof. 
\end{IEEEproof}

\subsection{LATC-based RIS Configuration}
After the UE position has been extracted, it is reported back to the AP and used as input for the RIS configuration process. However, since the position of the UE is not perfectly estimated, the induced phase shifts of the RIS are affected, which also influences the achievable spectral efficiency of the system. Therefore, in the following proposition, we provide the formula that describes the induced phase shift from each reflecting element for the case where the RIS performs beam steering and the UE location is acquired through the optical RSS-based method.
\begin{proposition}
The induced phase shift $\hat{\Phi}_{mn}$ from the $(m, n)$-th reflecting element for the case where the RIS performs the beam steering  functionality and the UE location is acquired through the optical RSS-based method is given by \eqref{phi_est}, where $\hat{\theta}_r$ and $\hat{\phi}_r$ are given, respectively, as
\begin{equation}\label{theta_r}
    \hat{\theta}_r = \tan^{-1}\left(\frac{\hat{z}_u-z_c}{\sqrt{(\hat{x}_u-x_c)^2+\hat{y}_u^2}}\right),
\end{equation}
and
\begin{equation}\label{phi_r}
   \hat{\phi}_r =  \tan^{-1}\left(\frac{\hat{y}_u}{\hat{x}_u-x_c}\right).
\end{equation}
\end{proposition}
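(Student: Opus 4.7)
The plan is to obtain $\hat{\Phi}_{mn}$ by expressing the estimated UE direction, as seen from the center of the RIS, in the angular coordinate system implicit in \eqref{phi_est}, and then substituting directly into that formula. Since the RIS is located in the $xz$ plane, centered at $O\equiv(x_c,0,z_c)$, and the UE estimate $\hat U\equiv(\hat x_u,\hat y_u,\hat z_u)$ has already been produced by the optical RSS-based method of Proposition 1 via \eqref{x}, \eqref{l}, and \eqref{z}, the only remaining task is a geometric change of variables from Cartesian to the $(\theta,\phi)$ convention employed in the far-field expression \eqref{farfield}.

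First, I would form the line-of-sight displacement vector from the RIS center to the estimated UE position, namely $\vec v = \hat U - O = (\hat x_u - x_c,\ \hat y_u,\ \hat z_u - z_c)$. Inspecting the $\zeta_{mn}(\theta,\phi)$ expression, the azimuth $\phi$ is measured in the plane perpendicular to the $z$-axis (so that the in-plane steering components scale with $\cos\phi$ along $x$ and $\sin\phi$ along $y$), while $\theta$ plays the role of the elevation above that plane, consistent with the $\sin\theta$ factor multiplying the in-plane offsets and the $\cos\theta$ factor multiplying $z_c$. Projecting $\vec v$ onto the $xy$-plane therefore yields the azimuth through $\tan\hat\phi_r = \hat y_u/(\hat x_u - x_c)$, which gives \eqref{phi_r}; and the angle between $\vec v$ and its $xy$-plane projection yields the elevation through $\tan\hat\theta_r = (\hat z_u - z_c)/\sqrt{(\hat x_u - x_c)^2 + \hat y_u^2}$, which gives \eqref{theta_r}.

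With $(\hat\theta_r,\hat\phi_r)$ identified in this way, the beam-steering phase prescription in \eqref{phi_est} can be applied verbatim: the first term removes the desired progressive linear phase across the aperture to point the main lobe toward $(\hat\theta_r,\hat\phi_r)$, the $-\omega_{mn}$ term compensates the spherical wave phase from the AP incident on element $(m,n)$, and the Von Mises contribution $\tilde\theta_{\mathrm{hw}}$ absorbs the residual hardware imperfections. Substituting the expressions for $\hat\theta_r$ and $\hat\phi_r$ obtained above into \eqref{phi_est} completes the derivation of $\hat\Phi_{mn}$.

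The main obstacle is simply the bookkeeping associated with the non-standard spherical convention used throughout the paper: because the RIS lies in the $xz$ plane rather than in the more typical $xy$ plane, one has to verify that $\theta$ is indeed the elevation above $xy$ (not the polar angle from $z$) and that $\phi$ is measured from the $x$-axis in the $xy$-plane, which is precisely what the structure of $\zeta_{mn}(\theta,\phi)$ in Section \ref{S3} dictates. Once this convention is fixed, the proof reduces to the elementary Cartesian-to-spherical conversion described above, applied to $\vec v$, and no further analytic manipulation is required.
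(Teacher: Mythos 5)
Your derivation is correct and matches the paper's own argument, which likewise obtains \eqref{theta_r} and \eqref{phi_r} from the triangles formed by $O$, $\hat{U}$, and the projection of $\hat{U}$ onto the $xy$ plane, and then plugs the resulting angles into \eqref{phi_est}. Your additional care in verifying the $(\theta,\phi)$ convention against the structure of $\zeta_{mn}(\theta,\phi)$ is a useful sanity check but does not change the route.
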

\begin{IEEEproof}
By observing the triangles between $O$, $\hat{U}$ and the projection of $\hat{U}$ onto the $xy$ plane, and the positive $x$ axis, the elevation and azimuth angles between the points $O$ and $\hat{U}$ can be calculated, respectively, by \eqref{theta_r} and \eqref{phi_r}, which concludes the proof.
\end{IEEEproof}

\section{Simulation Results}\label{S5}
In this section, we evaluate the performance of the proposed LATC scheme through simulations, assuming a communication scenario with a single-antenna AP and a single-antenna UE within a PWE equipped with 4 LEDs on the ceiling and a square-shaped LeRIS (i.e., $M = N$) with 4 LEDs mounted on it, as shown in Fig. \ref{fig:sm}. Specifically, the positions of the nodes and the dimensions of the PWE are given in Table \ref{PWE_char}, while the selected values for the parameters of both communication and localization systems are given in Table II. To ensure an accurate evaluation, we perform a Monte Carlo simulation with $10^5$ realizations of the UE location and orientation, where in each realization the coordinates of the UE are randomly chosen within $x_u \in [0,10\,\mathrm{m}]$, $y_u \in [0.5,5\, \mathrm{m}]$, and $z_u \in [0.5,2.5\,\mathrm{m}]$, while the UE's elevation and azimuth angles vary within $\theta_\mathrm{UE} \in \left[-\frac{\pi}{2}, \frac{\pi}{2}\right]$ and $\phi_\mathrm{UE} \in \left[-\frac{\pi}{2}, \frac{\pi}{2}\right]$, respectively. In addition, the average value of $K_{\text{opt},i} $, which captures the ratio of LoS to non-LoS components based on reflections from surrounding walls and objects, is set to the same value $K_{\mathrm{PWE}}$, unless stated otherwise, to represent a value typically found in PWEs. Furthermore, we assume that the AP has a distance $d_1=20$ m from the LeRIS and that the UE communicates with the AP using the LeRIS, which steers the waves transmitted by the AP to the receiver using beam steering. Finally, unless otherwise stated, we assume a Lambertian emission order $m_l=2$, and both AP and UE are equipped with isotropic antennas, i.e., $G_t=G_r=1$.

\begin{table}[]
	\renewcommand{\arraystretch}{1.25}
	\caption{\textsc{PWE Characteristics}}
	\label{PWE_char}
	\centering
	\begin{tabular}{ll}
		\hline
		\bfseries Parameter & \bfseries Value \\
		\hline\hline
		Ceiling LED	1	 	    & $\left(3.5, 5, 3\right)$ 			        \\
		Ceiling LED	2	 	    & $\left(4.5, 5, 3\right)$ 			        \\
  	Ceiling LED	3	 	    & $\left(5.5, 5, 3\right)$ 			        \\
        Ceiling LED	4	 	    & $\left(6.5, 5, 3\right)$ 			        \\
        LeRIS Center 	    & $\left(5, 0, 1.5\right)$ 			        \\
        LeRIS LED 1	 	    & $\left(4.1, 0, 1.5\right)$ 			        \\
        LeRIS LED 2	 	    & $\left(4.7, 0, 1.5\right)$ 
        \\
        LeRIS LED 3	 	    & $\left(5.3, 0, 1.5\right)$
        \\
        LeRIS LED 4 	    & $\left(5.9, 0, 1.5\right)$ 
        \\
        PWE Dimensions 	    & $10\,\mathrm{m} \times 10\,\mathrm{m} \times3\,\mathrm{m} $ 
        \\
		\hline
	\end{tabular}
\end{table}

\begin{table}[]
	\renewcommand{\arraystretch}{1.25}
	\caption{\textsc{Simulation Result Parameters}}
	\label{Sim_res}
	\centering
	\begin{tabular}{ll}
		\hline
		\bfseries Parameter & \bfseries Value \\
		\hline\hline
		Wavelength 	    & $\lambda=\frac{1}{8}$ m 			        
        \\
        RIS efficiency 	    & $n_{\mathrm{eff}}=100 \%$ 			        
        \\
		Element Size	 	    & $D = \frac{\lambda}{2}$ m 			            \\
  	Element Gain	 	    & $G_e = 1$ 			            \\
  	Reference Distance	 	    & $d_0 = 1$ m 			       
        \\
        RF Noise power	 	    & $\sigma_n^2 = -130\, \mathrm{dB}$ 			        \\
        RF transmit power 	    & $P_t = 1$ W 			        \\
        Optical transmit power 	    & $P_o = 50$ mW 			        \\
        Optical Noise power	 	    & $P_{n,i} = 5\times 10^{-14}$ A${}^2$ 
        \\
        Optical Filter Gain 	    & $T_f = 1$
        \\
        PD Area 	    & $A_\mathrm{PD} = 1$ cm${}^2$ 
        \\
        PD refractive index 	    & $n_c = 1.5$
        \\
        Half of the FoV angle & $\Psi_\mathrm{max} = 75^{\circ}$
        \\
		\hline
	\end{tabular}
\end{table}

Fig. \ref{fig:figure1} shows the localization error $\Delta d $ as a function of the UE's rotation angle $\theta_{\mathrm{UE}}$ for a given position with $K_{\mathrm{PWE}}=100 $ and $\phi_{\mathrm{UE}}=0^\circ$, for the case where only ceiling LEDs are used and for the case where both ceiling LEDs and a LeRIS are used. Specifically, the UE plane is initially parallel to the ceiling LEDs plane at $\theta_{\mathrm{UE}} = 90^\circ$ and gradually aligns with the LeRIS plane at $\theta_{\mathrm{UE}} = 0^\circ$. As can be seen, for the case where only ceiling LEDs are used by the LATC scheme, the localization error increases sharply as $\theta_\mathrm{UE}$ exceeds $45^\circ$, reflecting their limited coverage as the UE moves beyond their optimal range. However, when the LeRIS is added, the localization error remains significantly lower as it provides additional signal strength. In addition, although a higher Lambertian order $m_l=2$ generally improves accuracy for most $\theta_\mathrm{UE}$ values due to its focused emission, within the range $[40^\circ, 50^\circ]$, the broader optical emission of $m_l = 1$ outperforms $m_l = 2$, indicating that the optimal $m_l$ value varies with UE plane orientation. Nevertheless, it can be observed that for both $m_l$ values, a LeRIS leads to more stable localization over a wide range of angles, highlighting the importance of equipping a PWE with LeRISs to mitigate signal degradation and maintain high localization accuracy as the UE orientation varies.

  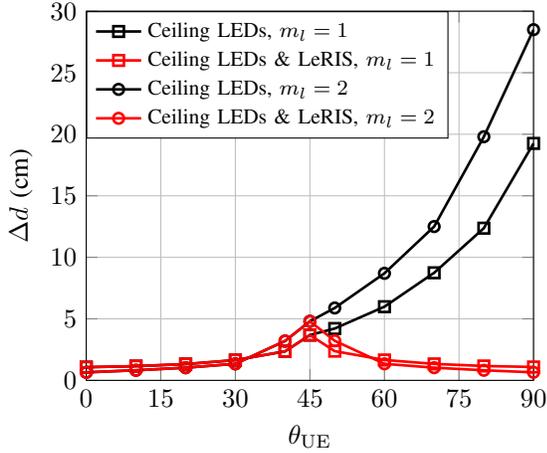
\begin{figure}
    \centering
    \begin{tikzpicture}
        \begin{axis}[
            width=0.85\linewidth,
            xlabel = {$\theta_\mathrm{UE}$},
            ylabel = {$\Delta d$ (cm)},
            ymin = 0,
            ymax = 30,
            xmin = 0,
            xmax = 90,
            ytick = {0,5,10,15,20,25,30},
            xtick = {0,15,30,...,90},
            grid = major,
		  legend entries ={{Ceiling LEDs, $m_l=1$},{Ceiling LEDs \& LeRIS, $m_l=1$},{Ceiling LEDs, $m_l=2$},{Ceiling LEDs \& LeRIS, $m_l=2$}},
            legend cell align = {left},
            legend style={font=\footnotesize},
            legend style={at={(0,1)},anchor=north west},
            ]
            \addplot[
            black,
            mark = square,
            mark repeat = 1,
            mark size = 2,
            line width = 1pt,
            style = solid,
            ]
            table {Fig2/ph1.dat};
            \addplot[
            red,
            mark = square,
            mark repeat = 1,
            mark size = 2,
            line width = 1pt,
            style = solid,
            ]
            table {Fig2/phi2.dat};
            \addplot[
            black,
            mark = o,
            mark repeat = 1,
            mark size = 2,
            line width = 1pt,
            style = solid,
            ]
            table {Fig2/m2LeRIS2_1.dat};
            \addplot[
            red,
            mark = o,
            mark repeat = 1,
            mark size = 2,
            line width = 1pt,
            style = solid,
            ]
            table {Fig2/m2LeRIS.dat};
        \end{axis}
    \end{tikzpicture}
    \caption{Localization error $\Delta d$ for varying azimuth angle $\theta_\mathrm{UE}$.}
    \label{fig:figure1}
\end{figure}


In Fig. \ref{fig:subfig2a} and Fig. \ref{fig:subfig2}, the localization error $\Delta d$ is plotted as a function of $K_{\mathrm{PWE}}$ for three Lambertian emission orders, $m_l=1$, $m_l=2$, and $m_l=5$, comparing scenarios with only ceiling LEDs and with both ceiling LEDs and a LeRIS. As $K_{\mathrm{PWE}}$ increases, $\Delta d$ decreases, highlighting the improvement in accuracy with greater LoS dominance. Furthermore, when using only ceiling LEDs, $\Delta d$ remains relatively high, especially at lower $K_{\mathrm{PWE}}$ values, while including LeRIS significantly reduces the error at all $K_{\mathrm{PWE}}$ levels. Furthermore, higher Lambertian orders, such as $m_l=5$, tend to produce higher localization errors at lower $K_{\mathrm{PWE}}$ due to their narrower emission patterns, which limit coverage for certain orientations. On the contrary, $m_l=2$ provides a balanced performance, achieving lower localization errors without the limitations of an omnidirectional pattern, i.e., $m_l=1$, or a highly directive pattern, i.e., $m_l=5$. However, as $K_{\mathrm{PWE}}$ continues to increase, this performance gap narrows, indicating that LoS clarity becomes the dominant factor in determining localization accuracy at high $K_{\mathrm{PWE}}$ values. To this end, Fig. \ref{fig:subfig2a} and Fig. \ref{fig:subfig2} highlight the role of LeRIS in improving localization performance, as well as the impact of $m_l$ on the LATC scheme.


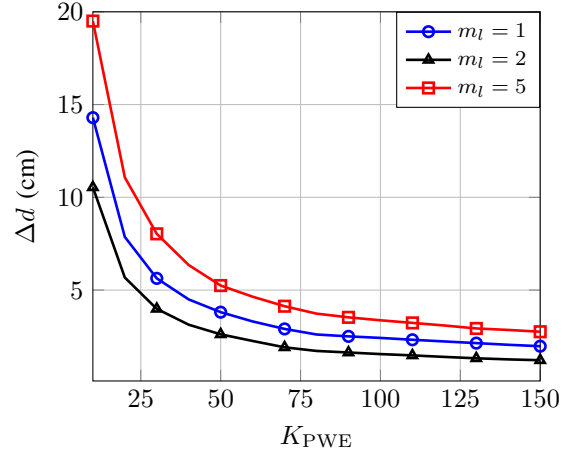
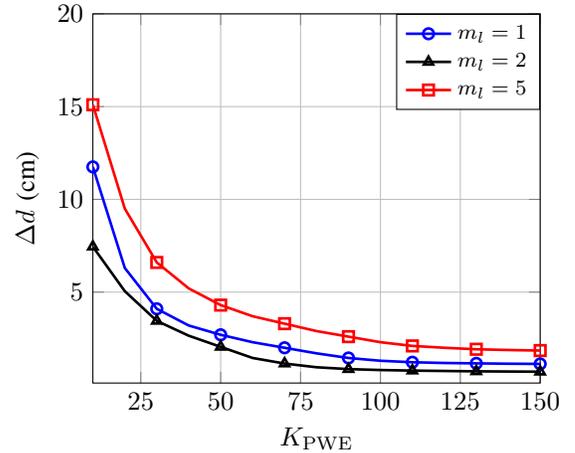
\begin{figure}
    \centering
    \begin{subfigure}{\linewidth}
        \centering
        \begin{tikzpicture}
            \begin{axis}[
                 width=0.85\linewidth,
                xlabel = {$K_{\mathrm{PWE}}$},
                ylabel = {$\Delta d$ (cm)},
                ymin = 0.1,
                ymax = 20,
                xmin = 10,
                xmax = 150,
                xtick = {25,50,...,150},
                grid = major,
                legend entries ={{$m_l=1$},{$m_l=2$},{$m_l=5$}},
                legend cell align = {left},
                legend style={font=\footnotesize},
                legend style={at={(1,1)},anchor=north east},
            ]
                \addplot[
                blue,
                mark = o,
                mark repeat = 2,
                mark size = 2,
                line width = 1pt,
                style = solid,
                ]
                table {Fig1/m1.dat};
                \addplot[
                black,
                mark = triangle,
                mark repeat = 2,
                mark size = 2,
                line width = 1pt,
                style = solid,
                ]
                table {Fig1/m2.dat};
                \addplot[
                red,
                mark = square,
                mark repeat = 2,
                mark size = 2,
                line width = 1pt,
                style = solid,
                ]
                table {Fig1/m5.dat};
            \end{axis}
        \end{tikzpicture}
        \caption{Ceiling LEDs.}
        \label{fig:subfig2a}
    \end{subfigure}
    \begin{subfigure}{\linewidth}
        \centering
        \begin{tikzpicture}
            \begin{axis}[
                width=0.85\linewidth,
                xlabel = {$K_{\mathrm{PWE}}$},
                ylabel = {$\Delta d$ (cm)},
                ymin = 0.1,
                ymax = 20,
                xmin = 10,
                xmax = 150,
                xtick = {25,50,...,150},
                grid = major,
                legend entries ={{$m_l=1$},{$m_l=2$},{$m_l=5$}},
                legend cell align = {left},
                legend style={font=\footnotesize},
                legend style={at={(1,1)},anchor=north east},
            ]
                \addplot[
                blue,
                mark = o,
                mark repeat = 2,
                mark size = 2,
                line width = 1pt,
                style = solid,
                ]
                table {Fig1/m1ler.dat};
                \addplot[
                black,
                mark = triangle,
                mark repeat = 2,
                mark size = 2,
                line width = 1pt,
                style = solid,
                ]
                table {Fig1/m2ler.dat};
                \addplot[
                red,
                mark = square,
                mark repeat = 2,
                mark size = 2,
                line width = 1pt,
                style = solid,
                ]
                table {Fig1/m5ler.dat};
            \end{axis}
        \end{tikzpicture}
        \caption{Ceiling LEDs \& LeRIS.}
        \label{fig:subfig2}
    \end{subfigure}
    \caption{Localization error $\Delta d$ versus $K_{\mathrm{PWE}}$ for various $m_l$ values.}
    \label{fig:figure2}
\end{figure}

Fig.~\ref{fig:figure3} illustrates the spectral efficiency as a function of $ K_{\mathrm{PWE}} $ for LeRIS sizes of $N=100$, $N=625$, and $N=1600$ reflecting elements, highlighting the effect of localization accuracy for different numbers of reflecting elements. Initially, at lower $K_{\mathrm{PWE}}$ values, where localization accuracy is limited, smaller RISs achieve comparable or even higher spectral efficiency than larger RIS arrays due to their broader beam coverage that compensates for imprecise beam steering. However, as $K_{\mathrm{PWE}}$ increases, $N=1600$ shows a clear advantage, reaching nearly 25 bps/Hz by using its narrower main lobe to efficiently direct power to the UE. This trend highlights the need for accurate localization to fully exploit the beam steering capabilities of larger RISs. However, given that typical $ K_{\mathrm{PWE}} $ values in indoor environments are between 50 and 100 \cite{Hass2018NLoS}, the viability of an optical RSS-based LATC scheme is well supported to achieve the localization precision required for fully exploiting the potential of larger RISs. Consequently, Fig.~\ref{fig:figure3} underscores that accurate localization is essential for effectively supporting larger numbers of reflective elements in a PWE, and highlights the potential of LeRIS to enable reliable LATC performance due to its favorable $ K_{\mathrm{PWE}} $ range.

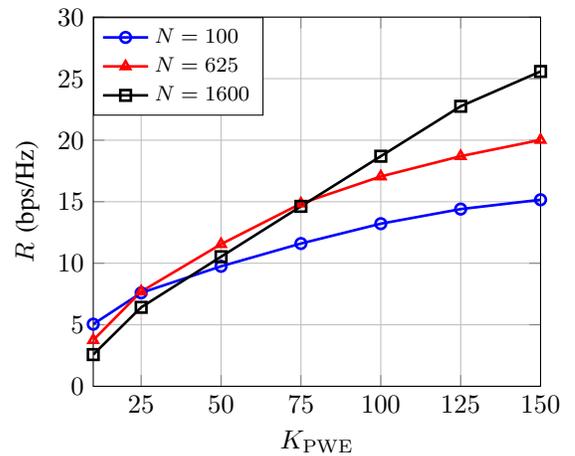
\begin{figure}
    \centering
    \begin{tikzpicture}
        \begin{axis}[
            width=0.85\linewidth,
            xlabel = {$K_{\mathrm{PWE}}$},
            ylabel = {$R$ (bps/Hz)},
            ymin = 0,
            ymax = 30,
            xmin = 10,
            xmax = 150,
            xtick = {0,25,50,...,150},
            ytick = {0,5,10,...,30},
            grid = major,
		legend entries ={{$N=100$},{$N=625$ },{$N=1600$ }},
            legend cell align = {left},
            legend style={font=\footnotesize},
            legend style={at={(0,1)},anchor=north west},
            ]
            \addplot[
            blue,
            mark = o,
            mark repeat = 1,
            mark size = 2,
            line width = 1pt,
            style = solid,
            ]
            table {Fig3/N100.dat};
            \addplot[
            red,
            mark = triangle,
            mark repeat = 1,
            mark size = 2,
            line width = 1pt,
            style = solid,
            ]
            table {Fig3/N625.dat};
            \addplot[
            black,
            mark = square,
            mark repeat = 1,
            mark size = 2,
            line width = 1pt,
            style = solid,
            ]
            table {Fig3/N1600.dat};
        \end{axis}
    \end{tikzpicture}
    \caption{Achievable spectral efficiency $R$ versus $K_{\mathrm{PWE}}$ for various LeRIS sizes.}
    \label{fig:figure3}
\end{figure}

In Fig. \ref{fig:figure4}, the effect of localization accuracy on spectral efficiency $R$ is analyzed for different levels of hardware imperfection and two LeRIS sizes, illustrating how the proposed LATC scheme mitigates imperfections and highlights the advantages of larger RISs under realistic conditions. Specifically, Fig. \ref{fig:figure4a} shows $R$ as a function of $K_{\mathrm{PWE}}$ for $N=100$ and $N=1600$, where it is evident that hardware imperfections, represented by $\kappa_{\mathrm{hw}}$, degrade performance for both sizes, with $N=1600$ being more sensitive due to its larger number of elements. This increased sensitivity arises because imperfections in $N=1600$ amplify the likelihood of incoherent element configurations, leading to pronounced side lobes and reduced main lobe gain, which diminishes $R$. Nevertheless, the LATC scheme plays a critical role in compensating for these imperfections by leveraging the localization accuracy achievable within the typical $K_{\mathrm{PWE}}$ range of indoor environments, enabling $N=1600$ to effectively utilize its larger aperture and outperform $N=100$ even in conditions of severe hardware imperfections. Further insights into the effects of hardware imperfections are provided in Figs. \ref{fig:figure4b} and \ref{fig:figure4c}, where $N=1600$ exhibits a significant reduction in main lobe magnitude and a noticeable distribution of energy into side lobes, highlighting its reliance on accurate localization to maintain performance. In contrast, $N=100$ demonstrates weaker side lobes and greater robustness to hardware imperfections, as its smaller number of elements reduces the impact of incoherent configurations. However, while $N=100$ is more resilient, its limited aperture size restricts its ability to achieve the directional gain of $N=1600$, especially when localization accuracy improves with increasing $K_{\mathrm{PWE}}$. Finally, it should be highlighted that when $\kappa_{\mathrm{hw}}=0$, $R$ becomes independent of localization accuracy because LeRIS randomly diffuses the signal over the PWE, with $N=1600$ consistently outperforming $N=100$ due to its larger size, which allows more power to impinge up it, resulting in greater signal diffusion.

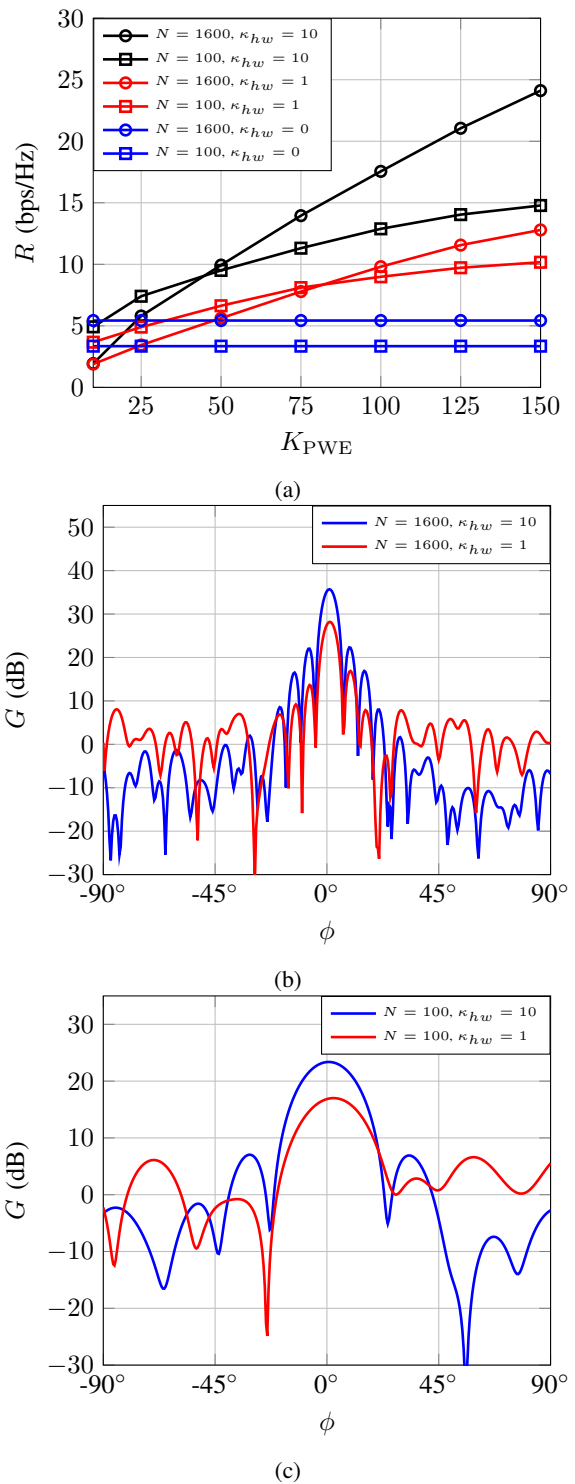
\begin{figure}[t!]
    \centering
    \begin{subfigure}{\linewidth}
        \centering
        \begin{tikzpicture}
            \begin{axis}[
                width=0.85\linewidth,
                xlabel={$K_{\mathrm{PWE}}$},
                ylabel={$R$ (bps/Hz)},
                ymin=0, ymax=30,
                xmin=10, xmax=150,
                xtick={25,50,...,150},
                ytick={0,5,10,15,20,25,30},
                legend cell align={left},
                legend style={font=\tiny, at={(0,1)},anchor=north west},
                grid=major,
                cycle list name=color list
            ]
                \addplot[black, mark=o, mark size=2, line width=1pt, style=solid]
                table {Fig4/10k1600.dat};
                \addlegendentry{$N=1600$, $\kappa_{hw}=10$}
                \addplot[black, mark=square, mark size=2, line width=1pt, style=solid]
                table {Fig4/10k100.dat};
                \addlegendentry{$N=100$, $\kappa_{hw}=10$}
                \addplot[red, mark=o, mark size=2, line width=1pt, style=solid]
                table {Fig4/1k1600.dat};
                \addlegendentry{$N=1600$, $\kappa_{hw}=1$}
                \addplot[red, mark=square, mark size=2, line width=1pt, style=solid]
                table {Fig4/1k100.dat};
                \addlegendentry{$N=100$, $\kappa_{hw}=1$}
                \addplot[blue, mark=o, mark size=2, line width=1pt, style=solid]
                table {Fig4/0k1600.dat};
                \addlegendentry{$N=1600$, $\kappa_{hw}=0$}
                \addplot[blue, mark=square, mark size=2, line width=1pt, style=solid]
                table {Fig4/0k100.dat};
                \addlegendentry{$N=100$, $\kappa_{hw}=0$}
            \end{axis}
        \end{tikzpicture}
        \caption{}
        \label{fig:figure4a}
    \end{subfigure}
    
    \begin{subfigure}
     {\linewidth}
        \centering
        \begin{tikzpicture}
            \begin{axis}[
                width=0.85\linewidth,
                xlabel={$\phi$},
                ylabel={$G$ (dB)},
                xmin=-90, xmax=90,
                ymin=-30, ymax=55,
                xtick={-90, -45, 0, 45, 90},
                xticklabels={-90$^\circ$, -45$^\circ$, 0$^\circ$, 45$^\circ$, 90$^\circ$},
                ytick={-30, -20, -10, 0, 10, 20, 30, 40, 50},
                legend cell align={left},
                legend style={font=\tiny, at={(1,1)},anchor=north east},
                grid=major,
                cycle list name=color list
            ]
                \addplot[blue, no marks, line width=1pt, style=solid]
               table[x index=0, y expr=10*log10(\thisrowno{1})] {Passaloi/10_1600.dat};  
                \addlegendentry{$N=1600$, $\kappa_{hw}=10$}
                \addplot[red, no marks, line width=1pt, style=solid]
                table[x index=0, y expr=10*log10(\thisrowno{1})] {Passaloi/1_1600.dat};
                \addlegendentry{$N=1600$, $\kappa_{hw}=1$}
            \end{axis}
        \end{tikzpicture}
        \caption{}
        \label{fig:figure4b}
    \end{subfigure}

    \begin{subfigure}
    {\linewidth}
        \centering
        \begin{tikzpicture}
            \begin{axis}[
                width=0.85\linewidth,
                xlabel={$\phi$},
                ylabel={$G$ (dB)},
                xmin=-90, xmax=90,
                ymin=-30, ymax=35,
                xtick={-90, -45, 0, 45, 90},
                xticklabels={-90$^\circ$, -45$^\circ$, 0$^\circ$, 45$^\circ$, 90$^\circ$},
                ytick={-30, -20,...,30},
                legend cell align={left},
                legend style={font=\tiny, at={(1,1)},anchor=north east},
                grid=major,
                cycle list name=color list
            ]          
                \addplot[blue, no marks, line width=1pt, style=solid]
                table[x index=0, y expr=10*log10(\thisrowno{1})] {Passaloi/10_100.dat};
                \addlegendentry{$N=100$, $\kappa_{hw}=10$}
                \addplot[red, no marks, line width=1pt, style=solid]
                table[x index=0, y expr=10*log10(\thisrowno{1})] {Passaloi/directivity_data.dat};
                \addlegendentry{$N=100$, $\kappa_{hw}=1$}
            \end{axis}
        \end{tikzpicture}
        \caption{}
        \label{fig:figure4c}
    \end{subfigure}
    \caption{Effect of LeRIS hardware imperfection on (a) Achievable spectral efficiency $R$ for various $K_{\mathrm{PWE}}$ values, (b) LeRIS gain $G$ for $N=1600$, and (c) LeRIS gain $G$ for $N=100$.}
    \label{fig:figure4}
\end{figure}

Finally, in Fig. \ref{fig:figure6}, the impact of the positional offset $\Delta_o$ on spectral efficiency $R$ is analyzed for LeRIS sizes of $N=100$ and $N=1600$, highlighting how such offsets affect performance differently depending on the array size. The PWE configures the RIS as if centered at $(0,0,0)$, while the actual RIS center is displaced by $\Delta_o$. As shown in Fig. \ref{fig:figure6a}, the spectral efficiency decreases for both $\Delta_o=1$ cm and $\Delta_o=3$ cm, with the reduction being more severe for $N=1600$ due to its narrower beamwidth, which increases sensitivity to misalignment. This effect becomes more evident when examining the LeRIS gain $G$ in Figs. \ref{fig:figure6b} and \ref{fig:figure6c}, which show the gain patterns for $N=1600$ and $N=100$, respectively, with the arrays configured to steer toward $\phi=45^\circ$ and $\theta=30^\circ$. For both $N=1600$ and $N=100$, the main lobe shifts away from the target direction as $\Delta_o$ increases, but the impact is more pronounced for $N=1600$ because its high-gain area is more concentrated. This makes even a small misalignment translate into a significant loss of gain at the intended direction, which can lead to outages for small $K_{\mathrm{PWE}}$ values. In contrast, $N=100$, with its broader beam, spreads energy across a wider angular region, preserving some gain near the target even when an offset is present, which helps maintain better performance despite positional inaccuracies. As $K_{\mathrm{PWE}}$ increases, however, improved localization accuracy reduces the steering error caused by $\Delta_o$, allowing the UE to remain within the high-gain area. Nevertheless, for larger $\Delta_o$, even high localization accuracy may not fully compensate for the lobe misalignment, emphasizing the importance of precisely knowing the RIS position within the PWE, especially for larger arrays. In practice, though, these offsets are typically small in controlled environments, and an efficient LATC scheme like the one proposed can effectively mitigate such errors, ensuring PWE robust performance.

\begin{figure}[t!]
    \centering
    \begin{subfigure}{0.85\linewidth}
        \centering
    \begin{tikzpicture}
        \begin{axis}[
            width=0.99\linewidth,
            xlabel = {$K_{\mathrm{PWE}}$},
            ylabel = {$R$ (bps/Hz)},
            ymin = 0,
            ymax = 30,
            xmin = 10,
            xmax = 250,
            xtick = {50,100,...,250},
            ytick = {0,5,...,30},
            grid = major,
		legend entries ={{$N=100$, $\Delta_o= 1$ cm},{$N=1600$, $\Delta_o= 1$ cm},{$N=100$, $\Delta_o= 3$ cm},{$N=1600$, $\Delta_o= 3$ cm} },
            legend cell align = {left},
            legend style={font=\tiny},
            legend style={at={(0,1)},anchor=north west},
            ]
            \addplot[
            blue,
            mark = o,
            mark repeat = 1,
            mark size = 2,
            line width = 1pt,
            style = solid,
            ]
            table {Fig6/7_100.dat};
            \addplot[
            blue,
            mark = square,
            mark repeat = 1,
            mark size = 2,
            line width = 1pt,
            style = solid,
            ]
            table {Fig6/7_1600.dat};

            \addplot[
            red,
            mark = o,
            mark repeat = 1,
            mark size = 2,
            line width = 1pt,
            style = solid,
            ]
            table {Fig6/2_100.dat};
            \addplot[
            red,
            mark = square,
            mark repeat = 1,
            mark size = 2,
            line width = 1pt,
            style = solid,
            ]
            table {Fig6/2_1600.dat};
            \end{axis}
    \end{tikzpicture}
    \caption{
    }
    \label{fig:figure6a}
    \end{subfigure}
 \begin{subfigure}{\linewidth}
        \centering
        \includegraphics[width=\textwidth]{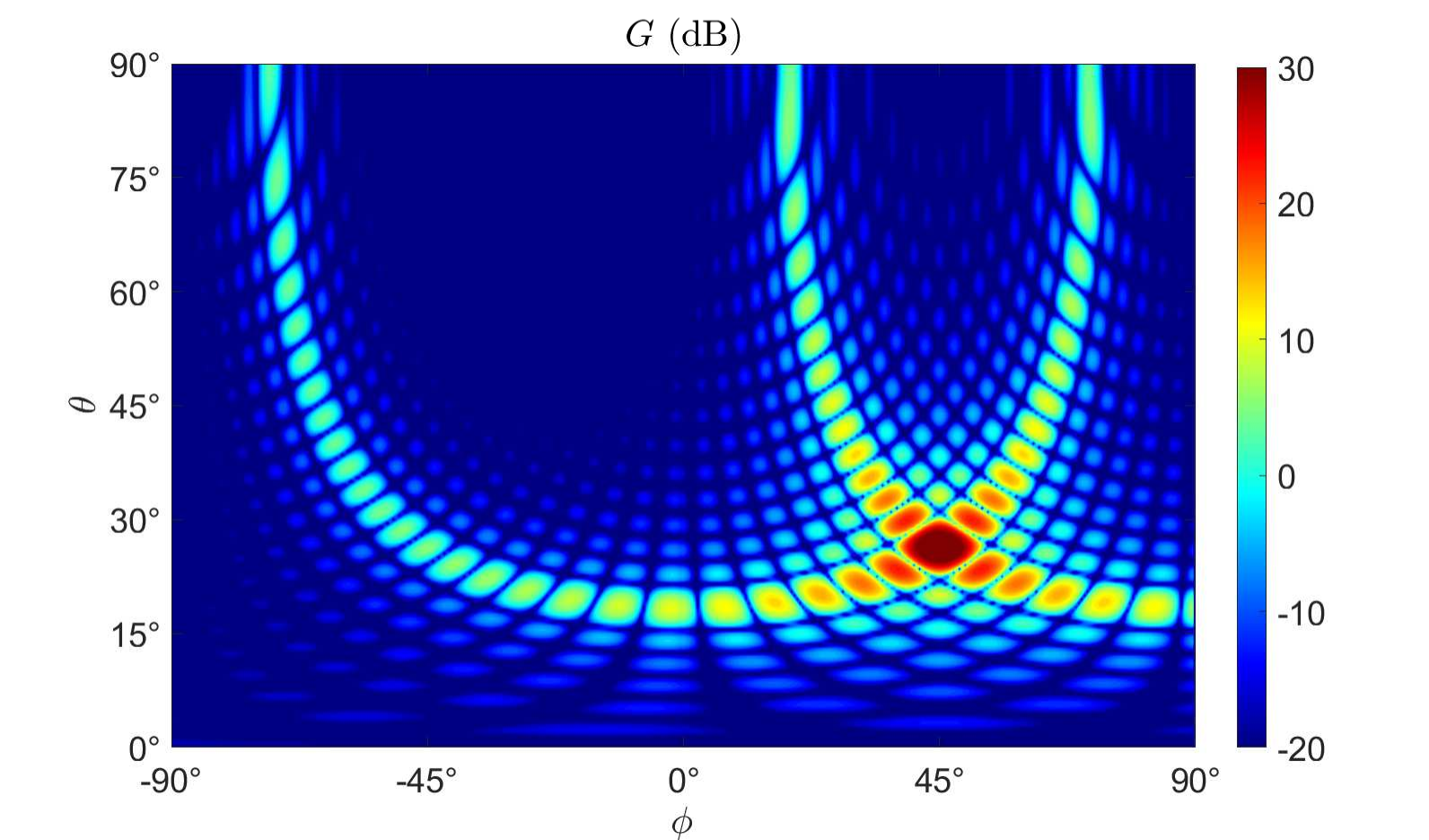}
        \caption{}
        \label{fig:figure6b}
    \end{subfigure}
     \begin{subfigure}{\linewidth}
        \centering
        \includegraphics[width=\textwidth]{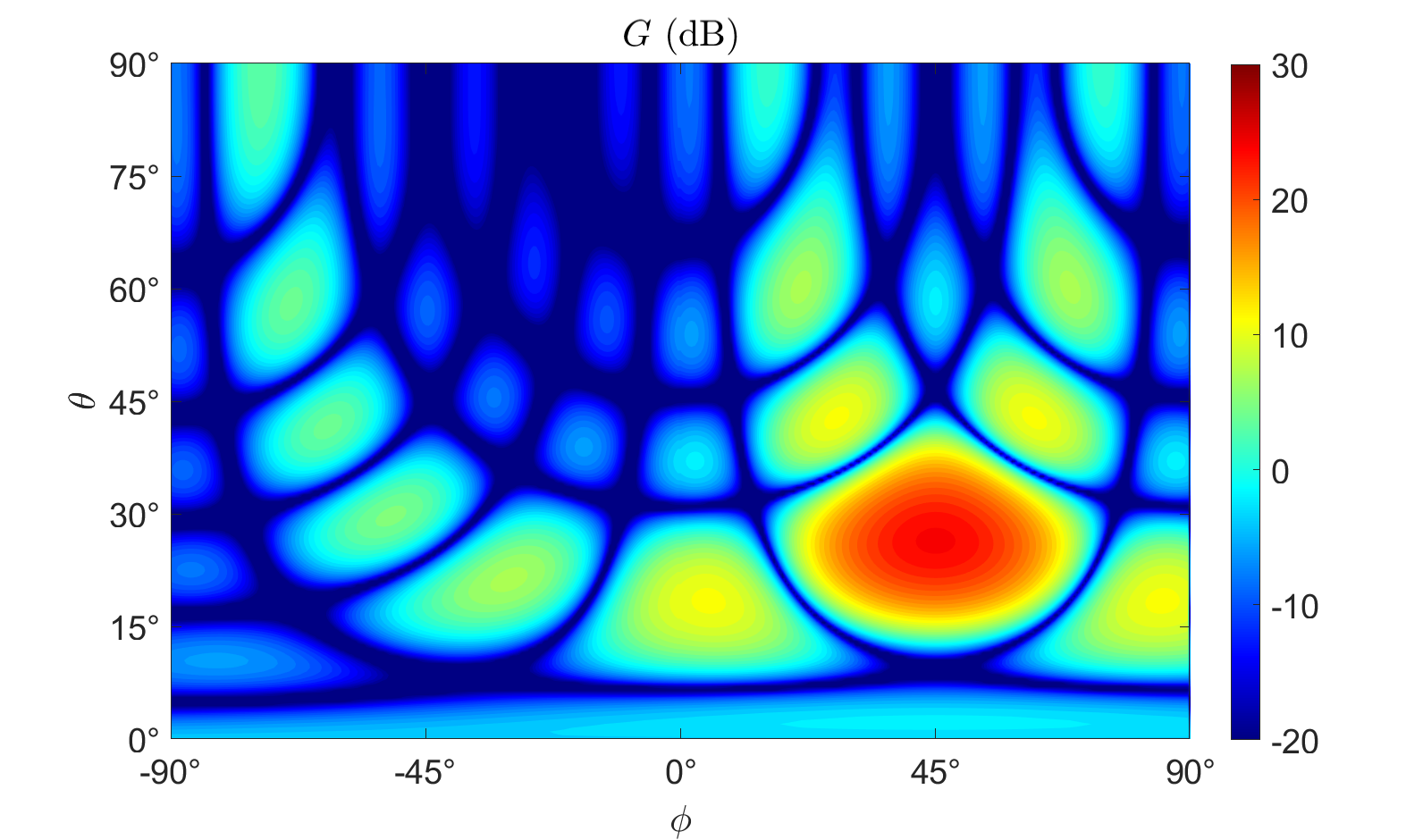}
        \caption{}
        \label{fig:figure6c}
    \end{subfigure}
    \caption{Effect of position offset $\Delta_o$ on (a) Achievable spectral efficiency $R$ for various $K_{\mathrm{PWE}}$ values, (b) LeRIS Gain $G$ for $N=1600$, and (c) LeRIS Gain $G$ for $N=100$.}
    \label{fig:figure6}
\end{figure}

\section{Conclusions}\label{S6}
In this work, we proposed an LED deployment strategy for PWEs that combines ceiling-mounted LEDs with LeRISs to achieve RSS-based localization for receivers with random orientations. By leveraging the spatial diversity of the LeRIS architecture, the proposed framework creates resilient signal paths that reduce localization errors caused by varying UE orientations. To support this, we derived closed-form expressions for optical RSS-based localization that account for random receiver orientations and established spatial constraints on LED placement to ensure uniquely solvable and accurate localization. Through extensive simulations, we demonstrated that the proposed LATC scheme not only delivers precise localization but also supports high spectral efficiency and robust beam control, even in the presence of hardware imperfections and positional offsets. Notably, for larger LeRIS sizes, where narrow directional beams are more sensitive to misalignments, the LATC scheme compensates effectively, ensuring performance under realistic conditions. In summary, the proposed localization-driven framework provides a scalable and adaptable solution for advancing PWE applications, demonstrating its ability to overcome practical challenges such as hardware imperfections and positional offsets. By combining precise localization with resilient beam control, this work paves the way for deploying robust and efficient RIS-assisted systems capable of meeting the high precision and responsiveness demands of PWEs and 6G networks.

\section*{Appendix A - Proof of Lemma \ref{l1}}
According to the optical RSS-based method, the UE position can be derived by solving a system of equations based on the RSS measurements for each LED that has a LoS link with the UE. Thus, considering 4 LEDs belonging to the set $\mathcal{I}$, and without loss of generality are placed upon $y=y_i$ plane, we obtain four equations from \eqref{coor}, each relating the RSS value to the UE’s spatial coordinates and the corresponding angle of incidence $\psi_{a,i}$. However, since this system involves seven unknown variables, i.e., the spatial coordinates of the UE and the four incidence angles $\psi_{a,i}$, we need three additional equations to determine all these unknowns. These additional equations can be derived from geometric equations between the LED positions and the UE coordinates, involving the incidence angles $\psi_{a,i}$ and the spatial arrangement of the LEDs. However, certain constraints naturally emerge through the RSS equations, restricting the possible geometric arrangements of the LEDs. Therefore, it is essential to identify these deployment constraints to ensure the system remains solvable.

To simplify our analysis and without loss of generality, we assume $y_i = 0$, which simplifies \eqref{coor} to
    \begin{align}\label{eq2}
    &\frac{(\hat{x}_u-x_i)^2 + \hat{y}_u^2 + (\hat{z}_u - z_i)^2}{\hat{y}_u^{\frac{2m}{m_l+2}}}    \nonumber \\ 
    &  = \left (\frac{P_{o,i}}{P_{r,i}} \frac{(m_l+1)A_{\mathrm{PD}}}{2\pi}T_{f}g_{c}(\psi_{a,i})\cos(\psi_{a,i})\right)^{2/(m_l+2)}.
\end{align}
Additionally, we assume that the PD can receive the signal from 4 LEDs and that the angle of incidence $\psi_{a,i}$ from each LED is within the maximum allowable field of view, ensuring $g_c(\psi_{a,i}) \neq 0, \forall i \in \mathcal{I}$.

To express the UE coordinates as a function of the system parameters, we start by dividing \eqref{eq2} for $i=1$ and $i=2$, yielding
\begin{equation}\label{sub1}
    \frac{(\hat{x}_u-x_1)^2 + \hat{y}_u^2 + (\hat{z}_u - z_1)^2}{(\hat{x}_u-x_2)^2 + \hat{y}_u^2 + (\hat{z}_u - z_2)^2 } = \left( \frac{P_{r,2} \cos(\psi_{a,1})}{P_{r,1} \cos(\psi_{a,2})} \right)^{2/(m_l+2)}.
\end{equation}
It should be noted that, to perform the division of these equations, the UE must not lie on the same plane as the LEDs, i.e., $\hat{y}_u \neq 0$, as evident from \eqref{eq2}. Moreover, to simplify the notation, we set
\begin{equation}\label{a1}
    \alpha_{1}= \left(\frac{P_{r,2} \cos(\psi_{a,1})}{P_{r,1} \cos(\psi_{a,2})} \right)^{2/(m_l+2)},
\end{equation}
and after some algebraic manipulations, we can express $\hat{y}_u$ as \eqref{l} which is shown at the top of the page. 
It should be noted, that since the direction of the LED transmission coincides with the positive $y$-axis, $\hat{y}_u$ must be a positive value, and thus, the negative root of \eqref{sub1} is disregarded.

Next, to express $\hat{x}_u$ as a function of the system parameters, similarly as with $\hat{y}_u$, we can divide by \eqref{eq2} for $i=1$ and $i=3$, and by substituting \eqref{l} and setting
\begin{equation}\label{a2}
    \alpha_{2} = \left(\frac{P_{r,3} \cos(\psi_{a,1})}{P_{r,1} \cos(\psi_{a,3})} \right)^{2/(m_l+2)},
\end{equation}
we obtain \eqref{sub2.1} which is given at the top of this page. Thus, by manipulating \eqref{sub2.1}, $\hat{x}_u$ can be obtained as
\begin{figure*}
    \begin{equation}\label{l}
    \hat{y}_u = \sqrt{\frac{1}{\alpha_{1}-1} \left( (\hat{x}_u - x_1)^2 + (\hat{z}_u - z_1)^2  -\alpha_{1}\left((\hat{x}_u - x_2)^2 + (\hat{z}_u - z_2)^2\right) \right)}
    \end{equation} 
\hrule    
\end{figure*}
\begin{figure*}
\begin{equation}\label{sub2.1}
\begin{aligned}
     & (\hat{x}_u - x_1)^2 + (\hat{z}_u - z_1)^2 + \frac{1}{\alpha_{1} -1} \left[ (\hat{x}_u - x_1)^2 
     + (\hat{z}_u - z_1)^2 - 
      \alpha_{1} \left((\hat{x}_u - x_2)^2 + (\hat{z}_u - z_2)^2\right) \right] \\
     & \qquad = \alpha_{2}\left[(\hat{x}_u - x_3)^2 + (\hat{z}_u - z_3)^2\right] 
      + \frac{\alpha_{2}}{\alpha_{1} -1} \left[ (\hat{x}_u - x_1)^2 + (\hat{z}_u - z_1)^2 - 
    \alpha_{1}  \left(  (\hat{x}_u - x_2)^2 + (\hat{z}_u - z_2)^2 \right)\right]
\end{aligned}
\end{equation}
\hrule
\end{figure*}
\begin{equation}\label{x}
    \hat{x}_u = \frac{\xi_1\hat{z}_u + \xi_3}{\xi_2},
\end{equation}
where $\xi_1$, $\xi_2$, and $\xi_3$ are shown at the top of this page in \eqref{xi1}, \eqref{xi2}, and \eqref{xi3}, respectively.
\begin{figure*}
\begin{align}
    &\xi_1 = 2\left [\alpha_{1}(z_1 - z_2) + \alpha_{2}(z_3 - z_1) + \alpha_{1}\alpha_{2}(z_2 - z_3) \right], \label{xi1} \\
    &\xi_2 = 2\left [\alpha_{1}(x_2 - x_1) + \alpha_{2}(x_1 - x_3) + \alpha_{1}\alpha_{2}(x_3 - x_2) \right],\label{xi2} \\
    &\xi_3 = \alpha_{1}\left(x_2^2-x_1^2 + z_2^2 - z_1^2\right) + \alpha_{2}\left(x_1^2 - x_3^2 + z_1^2 - z_3^2\right) 
    + \alpha_{1}\alpha_{2}\left( x_3^2 - x_2^2 + z_3^2 - z_2^2 \right) \label{xi3}
\end{align}
\hrule
\end{figure*}

Finally, to express $\hat{z}_u$ as a function of the system parameters, we divide \eqref{eq2} for $i=1$ and $i=4$,  substitute \eqref{l} and \eqref{x}, and then set $\alpha_3$ as 
\begin{equation}\label{a3}
   \alpha_3 = \left(\frac{P_{r,4} \cos(\psi_{a,1})}{P_{r,1} \cos(\psi_{a,4})} \right)^{2/(m_l+2)},
\end{equation}
which leads to \eqref{init} which is given at the top of the next page, and by manipulating \eqref{init}, we obtain \eqref{z} given at the top of the next page.
\begin{figure*}
    \begin{align}\label{init}
            &\left(  \frac{\xi_1 \hat{z}_u + \xi_3}{\xi_2} - x_1 \right)^2 + \left(\hat{z}_u - z_1\right)^2 + \frac{1}{\alpha_{1}-1}\left[   \left(  \frac{\xi_1 \hat{z}_u + \xi_3}{\xi_2} - x_1 \right)^2 + \left(\hat{z}_u - z_1\right)^2 -\alpha_{1}\left( \left( \frac{\xi_1 \hat{z}_u + \xi_3}{\xi_2} - x_2 \right)^2 + \left(\hat{z}_u - z_2\right)^2\right) \right]  \nonumber \\ 
            &\qquad = \alpha_{3} \left[\left(  \frac{\xi_1 \hat{z}_u + \xi_3}{\xi_2} - x_4 \right)^2 + \left(\hat{z}_u - z_4\right)^2\right]
            +\frac{\alpha_{3}}{\alpha_{1}-1}\left[   \left(  \frac{\xi_1 \hat{z}_u + \xi_3}{\xi_2} - x_1 \right)^2 + \left(\hat{z}_u - z_1\right)^2 \nonumber \right.\\
            & \qquad \qquad \left. -\alpha_{1}\left( \left( \frac{\xi_1 \hat{z}_u + \xi_3}{\xi_2} - x_2 \right)^2 + \left(\hat{z}_u - z_2\right)^2\right) \right]  
    \end{align} 
\hrule
        \begin{align}\label{z}
            \hat{z}_u=&  \frac{\left( \alpha_{1} - \alpha_{3} \right) \left[ x_1\left(\xi_2x_1 -2\xi_3\right) + \xi_2 z_1^2 \right] + \alpha_{1} \left( \alpha_{3} - 1 \right) \left[ x_2\left(\xi_2x_2 -2\xi_3\right) + \xi_2 z_2^2 \right] - \alpha_{3} \left( \alpha_{1} - 1 \right) \left[ x_4\left(\xi_2x_4 -2\xi_3\right) + \xi_2 z_4^2 \right]}{2\left[\left( \alpha_{1} - \alpha_{3} \right)\left( \xi_1x_1 + \xi_2z_1 \right) + \alpha_{1}\left( \alpha_{3} - 1 \right)\left( \xi_1x_2 + \xi_2z_2 \right) - \alpha_{3} \left( \alpha_{1} - 1 \right) \left( \xi_1x_4 + \xi_2z_4 \right)\right]}
        \end{align}
\hrule
\end{figure*}

Since we have now expressed the UE coordinates as functions of the system parameters, some constraints naturally arise regarding the LED positions. In more detail, based on \eqref{l} it, must hold that $\alpha_{1} \neq 1$, which translates to
\begin{equation}\label{con1}
    \frac{\cos{(\psi_{a,1})}}{P_{r,1}} \neq \frac{\cos{(\psi_{a,2})} } {P_{r,2}}. 
\end{equation}
Furthermore, from \eqref{x}, we can observe that $\xi_2$ must not be $0$, which only holds when at least 3 out of 4 LEDs do not have the same $x$ coordinate, since $a_j > 0, \, j \in \{1,2,3 \}$. To address these constraints regarding the identical $x$ coordinate, \eqref{sub2.1} should be solved for $\hat{z}_u$ rather than $\hat{x}_u$, thereby shifting the constraint to the $z$ coordinate of the LEDs instead.

Finally, to ensure that the system can be solved, the denominator of \eqref{z} can also not be equal to 0. More specifically, by taking into account that $a_1 \neq 1$, $\xi_2 \neq 0$, then to assure that the denominator in \eqref{z} is not zero, it must hold that $a_1 \neq a_3$ and $a_3 \neq 1$. Therefore, after some algebraic manipulations, the constraints mentioned above can be rewritten as 
\begin{equation}\label{con2}
    \frac{\cos{(\psi_{a,2})}}{P_{r,2}} \neq \frac{\cos{(\psi_{a,4})} } {P_{r,4}},
\end{equation}
and
\begin{equation}\label{con3}
    \frac{\cos{(\psi_{a,1})}}{P_{r,1}} \neq \frac{\cos{(\psi_{a,3})} } {P_{r,3}}.
\end{equation}
From constraints \eqref{con1}, \eqref{con2}, and \eqref{con3}, it can be derived that the system can be solved if and only if the distances from the LEDs, which are located on a plane parallel to the UE plane, to the UE are not equal. To prove that, these constraints must be expressed as functions of the distances between the LEDs and the UE for the case where the UE is parallel to the LEDs. Due to the planes being parallel, the angles $\psi_{d,i}$ are equal to $\psi_{a,i}$, thus, we can calculate the ratio of the received powers, appeared on \eqref{con1}, \eqref{con2}, and \eqref{con3}, as a function of the distances from each LED, by substituting $\psi_{a,i}$ with \eqref{AoD}, and then by solving \eqref{d_l} for $P_{r,i}$. 
Thus, by dividing \eqref{con1}, \eqref{con2}, and \eqref{con3} pairwise and performing algebraic manipulations, the constraints can be rewritten as
\begin{equation}\label{dist1}
    d_{l,1} \neq d_{l,2},
\end{equation}
and
\begin{equation}\label{dist2}
    d_{l,1} \neq d_{l,3},
\end{equation}
and
\begin{equation}\label{dist3}
    d_{l,2}\neq d_{l,4}.
\end{equation}
Thus, from \eqref{dist1}, \eqref{dist2}, and \eqref{dist3}, we can determine that the LEDs that are parallel to the UE must not have equal distances from the UE, which concludes the proof.

\bibliographystyle{IEEEtran}
\bibliography{Bibliography}
\end{document}